\newtheorem{assumption}{Assumption}%[section]
\newtheorem{remark}{Remark}%[section]
\newtheorem{theorem}{Theorem}%[section]
\newtheorem{definition}{Definition}%[section]
\newtheorem{lemma}{Lemma}%[section]
\newtheorem{property}{Property}
\newcommand{\bi}{\begin{itemize}}
	\newcommand{\ei}{\end{itemize}}
\newcommand{\be}{\begin{equation}}
	\newcommand{\ee}{\end{equation}}
\newcommand{\bd}{\begin{displaymath}}
	\newcommand{\ed}{\end{displaymath}}
\newcommand{\bea}{\begin{eqnarray}}
	\newcommand{\eea}{\end{eqnarray}}
\newcommand{\ba}{\begin{array}}
	\newcommand{\ea}{\end{array}}
\newcommand{\bc}{\begin{center}}
	\newcommand{\ec}{\end{center}}
\begin{document}

\title{Real-Time Progressive Learning: Accumulate Knowledge from Control with Neural-Network-Based Selective Memory}

\author{Yiming Fei, Jiangang Li$^{*}$,~\IEEEmembership{Senior Member,~IEEE}, Yanan Li$^{*}$,~\IEEEmembership{Senior Member,~IEEE}

\thanks{Y. Fei was with the School of Mechanical Engineering and Automation, Harbin Institute of Technology, Shenzhen 518055, China and is with the College of Computer Science and Technology, Zhejiang University, Hangzhou 310027, China. Email: yimingfei@zju.edu.cn}
\thanks{*Correspondence: J. Li is with the School of Mechanical Engineering and Automation, Harbin Institute of Technology, Shenzhen 518055, China. Email: jiangang\_lee@hit.edu.cn}
\thanks{*Correspondence: Y. Li is with the Department of Engineering and Design, University of Sussex, Brighton BN1 9RH, UK. Email: yl557@sussex.ac.uk}
}

% note the % following the last \IEEEmembership and also \thanks -
% these prevent an unwanted space from occurring between the last author name
% and the end of the author line. i.e., if you had this:

% The paper headers
%%%%%%%%%%%%%%%%%%%%%%%%%%%%%%%%%%%
%%%%%%%%%%%%%%%%%%%%%%%%%%%%%%%%%%%%
%%%%%%%%%%%%%%%%%%%%%%%%%%%%%%%%%%%%
%这里要改！！！！！！！！！！！！！！！！！

% The only time the second header will appear is for the odd numbered pages

% make the title area
\maketitle

% As a general rule, do not put math, special symbols or citations
% in the abstract or keywords.
%%%%%%%%%%%%%%%%%%%%%%%%%%%%%%%%%%%%%%%%%%%%%%%%%%%%%%%%%%%%%%%%%%%%%%%%%%%%%%%%
\begin{abstract}
Memory, as the basis of learning, determines the storage, update and forgetting of knowledge and further determines the efficiency of learning. Featured with the mechanism of memory, a radial basis function neural network based learning control scheme named real-time progressive learning (RTPL) is proposed to learn the unknown dynamics of the system with guaranteed stability and closed-loop performance. Instead of the Lyapunov-based weight update law of conventional neural network learning control (NNLC), which mainly concentrates on stability and control performance, RTPL employs the selective memory recursive least squares (SMRLS) algorithm to update the weights of the neural network and achieves the following merits: 1) improved learning speed without filtering, 2) robustness to hyperparameter setting of neural networks, 3) good generalization ability, i.e., reuse of learned knowledge in different tasks, and 4) guaranteed learning performance under parameter perturbation. Moreover, RTPL realizes continuous accumulation of knowledge as a result of its reasonably allocated memory while NNLC may gradually forget knowledge that it has learned. Corresponding theoretical analysis and simulation studies demonstrate the effectiveness of RTPL.
\end{abstract}

%\def\abstractname{Note to Practitioners}
%\begin{abstract}
%	This paper aims to propose an effective five axis contour error control scheme. At present, most of the five axis contour error control methods rely on the modification of the controller, which is not allowed in commercial CNC systems. Therefore, we propose a five axis contour error compensation algorithm based on sILC, which modifies the system's reference path. Its feasibility and advantages are verified by experiments.
%\end{abstract}

% Note that keywords are not normally used for peerreview papers.
\begin{IEEEkeywords}
	Real-time progressive learning, selective memory recursive least squares, persistent excitation condition, neural network learning control, adaptive neural control. 
\end{IEEEkeywords}

\IEEEpeerreviewmaketitle

\section{Introduction}
\IEEEPARstart{T}he control problem of nonlinear systems with uncertainties including unmodeled dynamics, exogenous disturbance, and parameter perturbation has attracted considerable attention in the past few decades. One of the most effective methods to deal with the uncertainties is to collect and utilize the information obtained from the control process and transform the unknown uncertainties into available knowledge, i.e., learning from control. Based on the universal approximation capability of neural networks, methods collectively known as adaptive neural control (ANC) were proposed by using neural networks to approximate uncertainties in real time \cite{r1,r2,r3}. As a result of its strong interpretability and guaranteed stability, ANC has attracted extensive attention in recent years and been applied to various uncertain systems \cite{r4,r5,r6}. However, the learning (weight convergence) of neural networks does not necessarily happen in ANC, which means that neural networks do not take full advantage of their universal approximation capability and have to readjust the weights even when repeating the same control task \cite{r7}. 

Based on the studies of parameter convergence in ANC, a series  of methods collectively named neural network learning control (NNLC) in this paper were proposed to simultaneously identify the uncertainties and complete the control using neural networks \cite{r8,r9,r10,r11}. Compared with traditional ANC, NNLC pays more attention to convergence of weights and aims to learn knowledge about uncertainties from the control process. Generally, NNLC can be divided into two categories according to whether a filter is used to obtain the estimation error based on recorded data. For filter-free NNLC, neural networks update only according to tracking errors measured from the closed-loop system, and both the tracking errors and the estimation errors will exponentially converge with the (partial) persistent exciation (PE) condition \cite{r7,r12}. Compared with ANC, filter-free NNLC achieves real-time learning of uncertainties and thus has attracted extensive attention in recent years. However, one main defect of this method is its slow learning speed, i.e., the neural networks' weights cannot converge accurately enough in a single round of a control task. Therefore, its application is mostly restricted to repetitive tasks \cite{r13,r14,r15,r16}. 

Composite adaptive control combines the features of direct and indirect adaptive control to improve both the control performance and the convergence of parameters \cite{r17,r18,r19}. Inspired by composite adaptive control, filter-based NNLC improves the learning speed of neural networks by constructing a weight update law which considers both the tracking error and the estimation error \cite{r10,r11,r20,r21}. Based on the composite update law, a more practical interval excitation condition is proposed to replace the PE condition to guarantee the exponential convergence of the weights. However, specific filters such as the fixed-point smoother are required to obtain the derivative of the state variables, resulting in increasing computational complexity \cite{r22}. Moreover, the effectiveness of the filter-based NNLC relies heavily on accurate estimation of the optimal weights, but parameter perturbation may cause a mismatch between the recorded data and current dynamics of the system such that the estimation of current optimal weights becomes inaccurate. Since the use of filters and estimation error is impractical in some cases, this paper aims to improve the learning speed of NNLC without filtering. 

Existing studies indicate that the PE condition and PE levels restrict the learning speed of the filter-free NNLC, and thus related research mainly focuses on how to guarantee appropriate PE levels or how to relax the PE condition to improve the learning speed \cite{r11,r23,r24}. Different from the Lyapunov-based approach, this paper analyzes the learning performance of NNLC from the perspective of memory of neural networks. Since the Lyapunov-based design of filter-free NNLC usually leads to a stochastic gradient descent (SGD) based weight update law, the neural network is confronted with the passive knowledge forgetting phenomenon found in \cite{r25}, i.e., knowledge will be gradually forgotten in the learning process. This uncontrolled forgetting is considered to be the underlying cause of the slow learning speed of the filter-free NNLC. 

To suppress passive knowledge forgetting, an effective memory mechanism is required to inform the neural network which data should be learned in the learning process. Therefore, the selective memory recursive least squares (SMRLS) algorithm becomes a potential method instead of SGD to train the radial basis function neural network (RBFNN) \cite{r25}. With SMRLS, the feature space of the RBFNN is uniformly discretized into disjoint partitions and the training samples within the same partition are synthesized into one sample according to memory update functions. Thanks to its unique memory mechanism, SMRLS overcomes passive knowledge forgetting and achieves improved learning speed and generalization capability of the learned knowledge. 

However, SMRLS was originally proposed to address the problem of online supervised learning. For learning control in closed-loop systems, the desired output of neural networks cannot be obtained, so it is impractical to apply SMRLS directly. Inspired by iterative learning control, the desired output of the RBFNN is estimated according to measurable tracking errors \cite{r25_1, r25_2, r25_3}. Based on this idea, a novel learning control scheme named real-time progressive learning (RTPL) is established by using SMRLS to train the RBFNN according to measurable signals in closed-loop systems in this paper. Theoretical analysis shows that the control and learning performance of RTPL can be guaranteed with suitably designed control gains and hyperparameters of the neural network. 

Compared with conventional NNLC, RTPL achieves the following merits: 
\begin{enumerate}
	\item{The learning speed is improved without using filters such that the learning will happen in both repetitive and nonrepetitive control tasks. 
	}
	\item{The robustness to hyperparameter setting of the neural network is improved. 
	 }
	\item{The generalization ability of the learned knowledge is improved as a result of the memory mechanism. 
	}
	\item{The learning performance under parameter perturbation is guaranteed. 
	 }
\end{enumerate}

More interestingly, RTPL achieves the accumulation of knowledge in long-period learning control tasks. Conventional online training methods for the RBFNN such as the SGD and forgetting factor recursive least squares (FFRLS) methods usually include a forgetting mechanism to ensure that the neural network is sensitive to new data \cite{r26,r27,r28}. Therefore, even if the past training samples contain significant knowledge of the system, they will be forgot as a result of the forgetting mechanism. By contrast, RTPL can distinguish which data should be learned and which data should be forgotten and gradually accumulate knowledge with its memory mechanism. 

The rest of this paper is organized as follows. Section \ref{section2} formulates the NNLC problem. Section \ref{section3} introduces preliminaries needed to develop the following theory and approach. Section \ref{section4} designs an SGD based filter-free NNLC scheme and analyzes its passive knowledge forgetting problem. In Section \ref{section5}, the RTPL control scheme is proposed and its stability and parameter convergence are analyzed. Section \ref{section6} provides corresponding simulation results to demonstrate the merits of RTPL. Finally, the main results of this paper are summarized in Section \ref{section7}. 

\section{Problem Formulation}\label{section2}
Consider the following $n$th-order affine nonlinear system in the Brunovsky canonical form   
\begin{equation}\label{eq1}
	\left\{ \begin{aligned}
		{{\dot x}_i} &= {x_{i + 1}},\quad i = 1,2, \ldots ,n \hfill \\
		{{\dot x}_n} &= f(x) + g(x)u \hfill \\ 
	\end{aligned}  \right.
\end{equation}
where $x = {\left[ {{x_1},{x_2}, \ldots ,{x_n}} \right]^T} \in {\mathbb{R}^n}$ is the state vector, $u\in{\mathbb{R}}$ is the continuous system input and  $f(x),g(x):{\mathbb{R}^n} \to \mathbb{R}$ are unknown nonlinear functions.  

\begin{assumption}\label{assumption1}
	Both $f(x)$ and $g(x)$ are unkown ${\mathcal{C}^1}$ functions of $x$. In addition, there exist constants $g_{0l},g_{0u},g_{1l},g_{1u}$ such that $0 < {g_{0l}} \leqslant g(x) \leqslant {g_{0u}}$ and $0<{g_{1l}} \leqslant |\dot{g}(x)| \leqslant {g_{1u}}$.
\end{assumption}

The following reference model is considered  
\begin{equation}\label{eq2}
	\left\{ \begin{gathered}
		{{\dot x}_{di}} = {x_{d(i + 1)}},\quad i=1,2, \ldots ,n \hfill \\
		{{\dot x}_{dn}} = {f_d}({x_d}) \hfill \\ 
	\end{gathered}  \right.
\end{equation}
where ${x_d} = {\left[ {{x_{d1}},{x_{d2}}, \ldots ,{x_{dn}}} \right]^T} \in {\mathbb{R}^n}$, and ${f_d}({x_d}):{\mathbb{R}^n} \to \mathbb{R}$ is a known smooth nonlinear function. It is assumed that $x_d(t)$ is uniformly bounded and satisfies ${x_d}(t) \in {\Omega _d},\forall t \geqslant 0$, where ${\Omega _d} \subset {\mathbb{R}^n}$ is a compact set. The control objective is to design the input $u$ such that the state vector $x$ can track ${x_d}$. 

A classical backstepping method is adopted to design the controller. Consider the following definition of errors  
\begin{equation}\label{eq3}
	\left\{ \begin{aligned}
		{e_1} &= {x_{d1}} - {x_1} \hfill \\
		{e_i} &= {\alpha _{i-1}} - {x_i} ,\quad i = 2, \ldots ,n \hfill \\ 
	\end{aligned}  \right.
\end{equation}
where ${\alpha _{i-1}},i = 2, \ldots ,n$ are virtual control inputs defined as ${\alpha _1} = {k_1}{e_1} + {x_{d2}}$ and ${\alpha _i} =  {{\dot \alpha }_{i - 1}} +{k_i}{e_i} + {e_{i-1}},i = 2, \ldots ,n-1$ with ${k_i>0},i = 1,2, \ldots ,n-1$ being the control gains. The control input $u$ is designed as follows  
\begin{equation}\label{eq4}
	u = {k_n}{e_n} + {e_{n - 1}} + p(x,\dot{\alpha}_{n - 1})
\end{equation}
where $p(x,\dot{\alpha}_{n - 1}) =  {g^{ - 1}}(x)\left( { {{\dot \alpha }_{n - 1}} - f(x) } \right)$. 

According to \cite{r29}, $\dot{\alpha}_{n - 1}$ is a ${\mathcal{C}^1}$ function of $x$ and $x_d$, so $p(x,\dot{\alpha}_{n - 1})$ is represented by $p(x,{x_d})$ in this paper. Since $p(x,{x_d})$ is unknown, it is approximated with neural networks in real time. The objective of the NNLC is to design an effective weight update law such that the learning of $p(x,{x_d})$ and control can be accomplished simultaneously. 

\begin{lemma}\label{lemma1}
	There exists a positive constant ${K_p}$ such that 
\begin{equation}\label{eq5}
\left| {p({x_d}) - p(x,{x_d})} \right| \leqslant {K_p}\left\| e \right\|,\forall x \in {\Omega _x}
\end{equation}
holds for any given compact set ${\Omega _x}$ satisfying ${\Omega _d} \subseteq {\Omega _x}$, where $p({x_d}) = p(x,{x_d}){|_{x = {x_d}}}$ and $e = {\left[ {{e_1},{e_2}, \ldots ,{e_{n - 1}},{e_n}/g(x)} \right]^T}$. 
\begin{proof}
As analyzed in \cite{r30}, $e_i$ can be expressed as follows  
\begin{equation}\label{eq6}
\left\{ \begin{aligned}
	{e_1} &= {x_{d1}} - {x_1} \hfill \\
	{e_i} &= \sum\limits_{j = 0}^{i - 1} {{a_{ij}}e_1^{(j)}} ,\quad i = 2,3, \ldots ,n \hfill \\ 
\end{aligned}  \right.
\end{equation}
where $a_{ij}$ are constant coefficients satisfying $a_{ij}=1$ for $j=i-1$. Transform \eqref{eq6} into the matrix form and we have  
\begin{equation}\label{eq7}
e = {\Lambda _e}({x_d} - x)
\end{equation}
where ${\Lambda _e}$ is a nonsingular lower triangular matrix. From \eqref{eq7} we can obtain  
\begin{equation}\label{eq8}
\begin{aligned}
	\left\| {{x_d} - x} \right\| = \left\| {\Lambda _e^{ - 1}e} \right\| \leqslant {\sigma _{\max }}(\Lambda _e^{ - 1})\left\| e \right\| \hfill \\ 
\end{aligned} 
\end{equation}
in which ${\sigma _{\max }}( \cdot )$ represents the maximal singular value. 

With Assumption \ref{assumption1} and $\dot{\alpha}_{n - 1}$ being a $\mathcal{C}^1$ function of $x$ and $x_d$, $p(x,x_d)$ is a ${\mathcal{C}^1}$ function of $x$ and $x_d$ defined on ${\Omega _x} \times {\Omega _d}$. Therefore, $p(x,x_d)$ is Lipschitz on ${\Omega _x} \times {\Omega _d}$ which means that there exists a positive constant $K_0$ such that  
\begin{equation}\label{eq9}
\left| {p({x_d}) - p(x,{x_d})} \right| \leqslant {K_0}\left\| {{x_d} - x} \right\|,\forall x \in {\Omega _x}. 
\end{equation}
Substitute \eqref{eq8} into \eqref{eq9} and the inequation \eqref{eq5} is derived where ${K_p} = {K_0}{\sigma _{\max }}(\Lambda _e^{ - 1})$. 
\end{proof}
\end{lemma}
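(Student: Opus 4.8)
The plan is to split \eqref{eq5} into two independent estimates: a linear-algebra bound relating the error vector $e$ to the state deviation $x_d-x$, and a Lipschitz bound for the unknown function $p$ on the relevant compact domain. First I would show that $e$ is an invertible linear image of $x_d-x$. Differentiating $e_1=x_{d1}-x_1$ and repeatedly using $\dot x_{di}=x_{d(i+1)}$ and $\dot x_i=x_{i+1}$ from \eqref{eq1} and \eqref{eq2} gives $e_1^{(j)}=x_{d(j+1)}-x_{j+1}$ for $j=0,\dots,n-1$; feeding these into the recursive definitions of the virtual controls in \eqref{eq3} shows that every $e_i$, $i\le n-1$, is a fixed linear combination of the entries of $x_d-x$, with the coefficient of $x_{di}-x_i$ equal to $1$. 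Collecting those identities, and appending the last coordinate $e_n/g(x)$, yields \eqref{eq7}, i.e., $e=\Lambda_e(x_d-x)$, where $\Lambda_e$ is lower triangular with diagonal entries $1,\dots,1,1/g(x)$. By Assumption~\ref{assumption1} these are all strictly positive, so $\Lambda_e$ is nonsingular; moreover $g(x)\ge g_{0l}>0$ makes $\sigma_{\max}(\Lambda_e^{-1})$ bounded uniformly in $x$, which gives \eqref{eq8}.

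Second I would use the regularity of $p$. By Assumption~\ref{assumption1}, $f$ and $g$ are $\mathcal{C}^1$ with $g$ bounded away from zero, and by \cite{r29} $\dot\alpha_{n-1}$ is a $\mathcal{C}^1$ function of $(x,x_d)$, so $p(x,x_d)=g^{-1}(x)(\dot\alpha_{n-1}-f(x))$ is $\mathcal{C}^1$ on $\Omega_x\times\Omega_d$. A $\mathcal{C}^1$ function has a bounded gradient on any compact set, so fixing the second argument at $x_d\in\Omega_d\subseteq\Omega_x$ and applying the mean value inequality along the segment from $x$ to $x_d$ produces a constant $K_0$, independent of the particular $x_d\in\Omega_d$, with $|p(x_d)-p(x,x_d)|\le K_0\|x_d-x\|$ for all $x\in\Omega_x$, which is \eqref{eq9}.

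Finally I would combine the two estimates: substituting \eqref{eq8} into \eqref{eq9} yields \eqref{eq5} with $K_p=K_0\,\sup_{x\in\Omega_x}\sigma_{\max}(\Lambda_e^{-1}(x))$, a finite positive constant because $\Lambda_e$ depends continuously on $x$ through $g(x)$ over the compact set $\Omega_x$.

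I expect the only subtle points to lie in the first step. One must check that the $e_i$ are genuinely linear in $x_d-x$ (not merely in $e_1$ and its derivatives) and that the resulting transformation is not just nonsingular but uniformly well conditioned over $\Omega_x$, since its last row carries the state-dependent factor $1/g(x)$; Assumption~\ref{assumption1} is exactly what makes this work. A second, minor caveat is that the mean value argument needs the segment $[x,x_d]$ to lie in a set where $p$ is $\mathcal{C}^1$; since $f$ and $g$ are $\mathcal{C}^1$ with $g>0$ on all of $\mathbb{R}^n$, one may simply replace $\Omega_x$ by its (compact) convex hull when extracting $K_0$. The Lipschitz estimate itself is then entirely routine.
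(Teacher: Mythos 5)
Your proposal is correct and follows essentially the same route as the paper: establish the linear relation $e=\Lambda_e(x_d-x)$ with $\Lambda_e$ nonsingular lower triangular, bound $\|x_d-x\|$ by $\sigma_{\max}(\Lambda_e^{-1})\|e\|$, invoke the $\mathcal{C}^1$/Lipschitz property of $p(x,x_d)$ on the compact set, and combine to get $K_p=K_0\,\sigma_{\max}(\Lambda_e^{-1})$. You are in fact somewhat more careful than the paper (which delegates the structure of $\Lambda_e$ to a citation and treats $\sigma_{\max}(\Lambda_e^{-1})$ as constant) in tracking the state-dependent $1/g(x)$ entry and taking a supremum over $\Omega_x$; the only tiny slip is that it is the upper bound $g(x)\leqslant g_{0u}$, rather than the lower bound, that keeps $\sigma_{\max}(\Lambda_e^{-1})$ uniformly bounded, but both bounds are supplied by Assumption~\ref{assumption1}.
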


\begin{remark}\label{remark1}
	The reason why the NNLC of system \eqref{eq1} is considered is to facilitate the understanding of RTPL in subsequent sections. The proposed RTPL method also has the potential to be applied to other systems such as strict-feedback systems and nonaffine systems \cite{r13,r31}. However, this is not within the scope of this paper. 
\end{remark}

\section{Preliminaries}\label{section3}
\subsection{RBFNN and Its Approximation Capability}\label{section2-2}
The neural network used to approximate unknown functions in this paper is the linearly parameterized RBFNN whose output can be formulated as follows   
\begin{equation}\label{eq10}
{f_{NN}}(\chi ) = \sum\limits_{i = 1}^N {{w_i}{\phi _i}(\chi ) = } {W^T}\Phi (\chi )
\end{equation}
where $N$ is the number of neurons in the hidden layer, $\chi  \in {\mathbb{R}^q}$ is the input vector, ${f_{NN}}(\chi ) \in \mathbb{R}$ is the scalar output, $W = {[{w_1},{w_2}, \ldots ,{w_N}]^T} \in {\mathbb{R}^N}$ is the weight vector, and $\Phi (\chi ) = {\left[ {{\phi _1}(\chi ),{\phi _2}(\chi ), \ldots ,{\phi _N}(\chi )} \right]^T} \in {\mathbb{R}^N}$ is the regressor vector composed of radial basis functions ${\phi _i}(\chi ),i = 1,2, \ldots ,N$. 

The Gaussian RBFNN is adopted in this paper whose radial basis function is formulated as  
\begin{equation}\label{eq11}
{\phi _i}(\chi ) = \exp \left( { - {{\left\| {\chi  - {c_i}} \right\|}^2}/2\sigma _i^2} \right),i = 1,2, \ldots N
\end{equation}
where ${c_i} \in {\mathbb{R}^q}$ is the center and ${\sigma _i} \in \mathbb{R}$ is the receptive field width of ${\phi _i}(\chi )$. 

\begin{assumption}\label{assumption2}
	In this paper, the lattice distribution of the neuron centers $c_i$ is adopted, which covers the region ${\Omega _d}$. 
\end{assumption}

\begin{property}\label{property1}
(Universal Approximation \cite{r32}): RBFNNs have the capability to approximate any continuous function $h(\chi):{\Omega _\chi} \to \mathbb{R}$ over a compact set ${\Omega _\chi } \subset {\mathbb{R}^q}$ to arbitrary precision on the premise of enough neurons and suitable centers and receptive field widths. Then $h(\chi)$ can be approximated as  
\begin{equation}\label{eq12}
h(\chi ) = {W^{*T}}\Phi (\chi ) + \epsilon(\chi ),\forall \chi  \in {\Omega _\chi }
\end{equation}
where $W^{*}\in {\mathbb{R}^N}$ is the optimal weight vector, and $\epsilon (\chi)$ is a bounded approximation error satisfying $\left| {\epsilon(\chi )} \right| < {\epsilon^*},\forall \chi  \in {\Omega _\chi }$. In this paper, $W^{*}$ is defined as follows  
\begin{equation}\label{eq13}
{W^*} \triangleq \mathop {\arg \min }\limits_{W \in {\mathbb{R}^N}} \left\{ {{{\int_{{\Omega _\chi }} {\left\| {h(\chi ) - {W^T}\Phi (\chi )} \right\|} }^2}d\chi } \right\}.
\end{equation}	
\end{property}

\begin{property}\label{property2}
(Localized Approximation Capability \cite{r7}):  Assume that ${\Omega _{\chi j}}, j = 1,2, \ldots ,M$ are different regions over the compact set ${\Omega _\chi }$, ${W_j}$ is the weight vector of the neurons whose centers are close to the region ${\Omega _{\chi j}}, $ $W_j^ *  \in {\mathbb{R}^{{N_j}}}$ is the corresopnding subvector of ${W^ * }$, and the local approximation of $h(\chi )$ over ${\Omega _{\chi j}}$ can be performed with $W_j^ * $ as follows  
\begin{equation}\label{eq14}
\begin{aligned}
	h(\chi ) &= {W^{*T}}\Phi (\chi ) + \epsilon(\chi ) \\ 
	&= W_j^{ * T}{\Phi _j}(\chi ) + {\epsilon_j}(\chi ),\forall \chi  \in {\Omega _{\chi j}} \\ 
\end{aligned} 
\end{equation}
where ${\Phi _j}(\chi )$ is a subvector of $\Phi (\chi )$ corresponding to $W_j^ * $, and $\epsilon{_j}(\chi )$ is a bounded approximation error satisfying $\left| {{\epsilon_j}(\chi )} \right| < \epsilon_j^ * ,\forall \chi  \in {\Omega _{\chi j}}$ with the value of $\left| {\epsilon(\chi ) - {\epsilon_j}(\chi )} \right|$ being small. 
\end{property}

\subsection{Partial Persistent Excitation Condition for RBFNNs}\label{section2-3}
The PE condition of the regressor vector $\Phi (\chi )$ is an important concept in filter-free NNLC which can lead to exponential convergence of the weight estimation error. 
\begin{definition}\label{definition1}
(Persistent Excitation Condition \cite{r28}): Assume that $\Phi (\chi ):{\mathbb{R}^q} \to {\mathbb{R}^N}$ is a piecewise-continuous and uniformly-bounded regressor vector. Then it satisfies the PE condition if there exist positive constants $\delta $, ${\alpha _1}$, ${\alpha _2}$ such that
\begin{equation}\label{eq15}
	{\alpha _1}I \geqslant \int_{{t_0}}^{{t_0} + \delta } {\Phi (\tau ){\Phi ^T}(\tau )d\tau  \geqslant } {\alpha _2}I
\end{equation}
	holds for  $\forall {t_0} \geqslant 0$, where $I \in {\mathbb{R}^{N \times N}}$ is an identity matrix. 
\end{definition}

Let the subscript $\zeta$ denote the neurons whose center is close to the input trajectory of the RBFNN and the PE condition of the regressor subvector ${\Phi _\zeta }(\chi )$ is named as the partial PE condition of $\Phi (\chi )$ \cite{r7}. 

\begin{assumption}\label{assumption3}
The PE condition of the regressor subvector $\Phi_\zeta(\chi)$ is satisfied with any input trajectory of the RBFNN whose duration is finite. 
\end{assumption}

\begin{remark}\label{remark2}
Existing studies indicate that the recurrent input trajectory of the RBFNN is a sufficient condition for the partial PE condition \cite{r7,r33}. However, we argue that the partial PE condition can always be used as a priori knowledge in theoretical analysis, whether the finite-duration input trajectory of the RBFNN is recurrent or not. The rationality of Assumption \ref{assumption3} can be verified through the following steps. As analyzed in \cite{r7,r33}, the regressor subvector ${\Phi _\zeta }(\chi )$ is always sufficiently excitated for any given finite trajectory $\chi (t),t \in \left[ {0,{T_0}} \right]$, which means that the integral $\int_{{0}}^{ T_0 } {\Phi_{\zeta} (\tau ){\Phi ^T_{\zeta}}(\tau )d\tau } $ is nonsingular. Let ${\varphi _\chi }(t),t \in \left[ {0, + \infty } \right)$ denote a periodic infinite trajectory whose initial stage is $\chi (t),t \in \left[ {0,{T_0}} \right]$, and it is easy to obtain that the original subvector ${\Phi _\zeta }(\chi )$, whose neuron centers are close to $\chi (t)$, satisfies the PE condition along ${\varphi _\chi }(t)$. Since ${\varphi _\chi }(t),t \in \left[ {0,{T_0}} \right]$ is the same as $\chi (t),t \in \left[ {0,{T_0}} \right]$, NNLC along these two trajectories shares the same convergence process on the time interval $\left[ {0,{T_0}} \right]$. Therefore, for any given $\chi (t),t \in \left[ {0,{T_0}} \right]$, the partial PE condition can be used to analyze the convergence in NNLC. 
\end{remark}

\subsection{Selective Memory Recursive Least Squares}\label{section2-3}
To overcome the passive knowledge forgetting problem caused by forgetting factors, a recursive least squares (RLS) based real-time training method named SMRLS is proposed to improve the learning performance of the RBFNN \cite{r25}. 

Consider the function approximation task mentioned in Property \ref{property1}, where the unknown function $h(\chi)$ is approximated by an RBFNN over its input space $\Omega_\chi$. With Assumption \ref{assumption2}, the input space normalization is omitted, and $\Omega_\chi$ is evenly discretized into $N_P$ disjoint partitions $\Omega _\chi ^j,j = 1,2, \ldots ,{N_P}$. Let $\chi (k+1)$, $h(k+1)$ denote the value of $\chi$ and $h(\chi)$ at sampling time $k+1$, respectively, and it is assumed that $\chi (k + 1) \in \Omega _\chi ^a$, i.e. the partition $\Omega _\chi ^a$ is being sampled at sampling time $k+1$. 

According to SMRLS \cite{r25}, the weights of the RBFNN can be updated using the following update laws  
\begin{itemize}[leftmargin=*]
	\item \textbf{Selective Memory Recursive Least Squares}
	\begin{equation}\label{eq16}
		\begin{footnotesize}
			\!\! \! \! \! \! \! \!  \begin{aligned}
				&W(k + 1) = \\
		 &W(k) + P(k + 1)\Phi (k + 1)\left[ {h(k + 1) - {W^T}(k)\Phi (k + 1) }  \right] - \epsilon_a(k+1)\\
				\\
				&{P^{ - 1}}(k+1) ={P^{ - 1}}(k)+ \Phi (k + 1) \Phi (k + 1)^T -\Phi_a (k){\Phi_a ^T}(k) 
			\end{aligned} 
		\end{footnotesize}
	\end{equation}
\end{itemize}
where ${\epsilon _a}(k + 1) = P(k + 1){\Phi _a}(k)[{h_a}(k) - {W^T}(k){\Phi _a}(k)]$, $P(k + 1) \in {\mathbb{R}^{N \times N}}$ is a positive definite matrix, ${\Phi_a }(k)$ and ${h_a}(k)$ are the recorded regressor vector and desired neural network output of partition $\Omega _\chi ^a$ before sampling time $k$, respectively. Since ${h_a}(k) - {W^T}(k){\Phi _a}(k)$ represents the network approximation error to a sample that has been learned, it becomes a small value if the approximation capability of the neural network is sufficient. Therefore,  $\left| {{\epsilon _a}(k + 1)} \right| $ will also be small if the neural network is set properly. 

To facilitate theoretical analysis, it is assumed that the sampling period is small enough such that \eqref{eq16} can be reformulated into the continuous form  
\begin{equation}\label{eq17}
\left\{ \begin{gathered}
	\dot W = P\Phi (\chi )\left[ {h - {W^T}\Phi (\chi )} \right] - {\epsilon _a} \hfill \\
	{{\dot P}^{ - 1}} = \Phi {\Phi ^T} - {\Phi _a}\Phi _a^T \hfill \\ 
\end{gathered}  \right.
\end{equation}
where ${\epsilon _a} = P{\Phi _a}({h_a} - {W^T}{\Phi _a})$ satisfying $\left\| {{\epsilon _a}} \right\| < \epsilon _a^*$ is the continuous representation of ${\epsilon _a}(k+1)$. It should be noted that the symbol $\dot{P}^{-1}$ represents the derivative of $P^{-1}$.  

\begin{assumption}\label{assumption4}
In this paper, the approximation capability of the neural network is sufficient such that $\left\| {{\epsilon _a}} \right\| < \epsilon _a^*$ where $\epsilon _a^*>0$ is a small error upper bound determined by hyperparameters of the neural network. 
\end{assumption}

\begin{remark}\label{remark3}
The initial values are set as $P(0) = {p_0}I $, $W(0) = {0_{N \times 1}}$, where ${p_0} > 0$ is a large constant and $I\in {\mathbb{R}^{N \times N}}$ is an identity matrix. In addition, the recorded regressor vector ${\Phi _a}$ and desired output $h_a$ of each partition $\Omega_\chi^a, a = 1,2,\ldots,N_P$ are set to ${0_{N \times 1}}$ and $0$, respectively. 
\end{remark}
Consider the discrete update law of ${P}^{-1}(k)$ in \eqref{eq16}. Assuming that there are $N_k$ partitions that have been sampled at least once before sampling time $k$, and the recorded regressor subvectors of the partitions are ${\Phi(j)},j = 1,2, \ldots ,{N_k}$, then  ${P}^{-1}(k)$ can be calculated by  
\begin{equation}\label{eq18}
{P^{ - 1}}(k) = {P^{ - 1}}(0) + \sum\limits_{j = 1}^{{N_k}} {\Phi (j){\Phi ^T}(j)}.
\end{equation}
According to Remark \ref{remark3}, $P^{-1}(0)=\frac{1}{p_0}I$, so $P(k)$ and $P^{ - 1}(k)$ are always positive definite in the learning process. 

According to Courant-Fischer min-max theorem, we have  
\begin{equation}\label{eq19}
	\begin{small}
	\begin{aligned}
		{\lambda _{\min }}({P^{ - 1}}(k)) &= \mathop {\min }\limits_{X \in {\mathbb{R}^N},\left\| X \right\| \ne 0} \frac{{{X^T}{P^{ - 1}}(k)X}}{{{X^T}X}} \hfill \\
		&= \mathop {\min }\limits_{X \in {\mathbb{R}^N},\left\| X \right\| \ne 0} \frac{{{X^T}\left[ {{P^{ - 1}}(0) + \sum\limits_{j = 1}^{{N_k}} {\Phi (j){\Phi ^T}(j)} } \right]X}}{{{X^T}X}} \hfill \\
		&= {\lambda _{\min }}({P^{ - 1}}(0)) + {\lambda _{\min }}\left( {\sum\limits_{j = 1}^{{N_k}} {\Phi (j){\Phi ^T}(j)} } \right) \hfill \\
		&\geqslant {\lambda _{\min }}({P^{ - 1}}(0)) \hfill \\ 
	\end{aligned} 
	\end{small}
\end{equation}
where $\lambda_{\min}(\cdot)$ represents the minimum eigenvalue. Since ${\lambda _{\max }}(P(k)) = 1/{\lambda _{\min }}({P^{ - 1}}(k))$, we obtain ${\lambda _{\max }}(P(k)) \leqslant {\lambda _{\max }}(P(0))$ holds for any $k>0$, where $\lambda_{\max}(\cdot)$ represents the maximum eigenvalue. Consider the initialization in Remark \ref{remark3}, and the following inequation  
\begin{equation}\label{eq20}
{\lambda _{\max }}(P(k)) \leqslant {p_0}
\end{equation}
holds for any $k>0$. Similarly, the maximum eigenvalue of $P^{-1}(k)$ satisfies  
\begin{equation}\label{eq21}
\begin{aligned}
	{\lambda _{\max }}({P^{ - 1}}(k)) &= {\lambda _{\max }}({P^{ - 1}}(0)) + {\lambda _{\max }}\left( {\sum\limits_{j = 1}^{{N_k}} {\Phi (j){\Phi ^T}(j)} } \right). \hfill \\
\end{aligned} 
\end{equation}
Since there are $N_P$ partitions in total, we have ${N_k} \leqslant {N_P}$ indicating that ${\lambda _{\max }}({P^{ - 1}}(k))$, i.e. $1/{\lambda _{\min }}(P(k))$, will not increase endlessly even for an infinite trajectory of $x_d$. Therefore, there exists a positive constant $q_0>0$ such that ${q_0} \leqslant {\lambda _{\min }}(P(k))$ holds for any $k>0$. 

Assume that the sampling period is small enough such that \eqref{eq16} is equivalent to \eqref{eq17} and the following assumption is considered. 
\begin{assumption}\label{assumption5}
With the initialization in Remark \ref{remark3}, $P$ and $P^{-1}$ are always positive definite along the trajectory of \eqref{eq17}. In addition, there exist positive constants $q_0$ and $p_0$ such that $0<q_0 \leqslant \lambda_{\min}(P) \leqslant {\lambda _{\max }}(P) \leqslant {p_0}$ always holds. 
\end{assumption}

\section{Stochastic Gradient Descent Based Neural Network Learning Control}\label{section4}
Under the partial PE condition, the filter-free NNLC adopts a gradient descent based weight update law and achieves exponential convergence of its weight subvector \cite{r8, r9}. In this section, a filter-free NNLC scheme in the form of an adaptive feedforward controller is designed and its passive knowledge forgetting problem is analyzed. 

\subsection{Adaptive RBFNN Control and Its Learning Phenomenon}\label{section4-1}
Consider the tracking control problem in Section \ref{section2} and the backstepping control law \eqref{eq4}. An RBFNN is employed to approximate the unknown function $p(x_d)$. According to Property \ref{property2}, let the subscript $\zeta$ denote the neurons close to the trajectory of $x_d$ and $p(x_d)$ can be approximated by 
\begin{equation}\label{eq22}
p({x_d}) = {W_\zeta^{ * T}}\Phi_\zeta ({x_d}) + {\epsilon _{p1}}({x_d})
\end{equation}
where ${\epsilon _{p1}}({x_d})$ is a bounded approximation error determined by hyperparameters of the RBFNN and satisfies $\left| {{\epsilon _{p1}}({x_d})} \right| < \epsilon _{p1}^ * $, $W_\zeta^{ *}\in {\mathbb{R}^{N_\zeta}}$ and $\Phi_\zeta ({x_d})\in {\mathbb{R}^{N_\zeta}}$ are corresponding subvectors of $W^*$ and $\Phi(x_d)$, respectively. The optimal weight vector $W^{*}$ is defined as follows  
\begin{equation}\label{eq23}
{W^ * } = \mathop {\arg \min }\limits_{W \in {\mathbb{R}^N}} \left\{ {{{\int_{{\Omega _d}} {\left\| {p({x_d}) - {W^T}\Phi ({x_d})} \right\|} }^2}d{x_d} } \right\}.
\end{equation}

As analyzed in \cite{r34}, a hybrid feedforward feedback control law is established by replacing $p(x,\dot{\alpha}_{n - 1})$ of \eqref{eq4} with $p(x_d)$ as follows  
\begin{equation}\label{eq24}
	u = {k_n}{e_n} + {e_{n - 1}} + p(x_d). 
\end{equation}

Let $\hat W_\zeta \in {\mathbb{R}^{N_\zeta}}$ denote the approximation of $W_\zeta^{*}$. Define the weight approximation error $\tilde W_\zeta = {W_\zeta^ * } - \hat W_\zeta$, the function tracking error ${\epsilon _{p2}}(x,{x_d}) = p({x,x_d}) - p(x_d)$, and the approximation of $p({x_d})$ is formulated as  
\begin{equation}\label{eq25}
\begin{aligned}
	\hat p(x_d) &= {{\hat W}^T}\Phi ({x_d}) \hfill \\
	&= \hat W_\zeta ^T{\Phi _\zeta }({x_d}) - {\epsilon _\zeta }({x_d}) \hfill \\
	&= W_\zeta ^{*T}{\Phi _\zeta }({x_d}) - \tilde W_\zeta ^T{\Phi _\zeta }({x_d}) - {\epsilon _\zeta }({x_d}) \hfill \\
	&= p({x,x_d}) - {{\tilde W}_\zeta^T}\Phi_\zeta ({x_d}) - {\epsilon _p}(x,{x_d}) \hfill \\ 
\end{aligned} 
\end{equation}
where ${\epsilon _\zeta }({x_d})$ satisfying $\left| {{\epsilon _\zeta }({x_d})} \right| < \epsilon _\zeta ^*$ is the approximation error caused by the neurons apart from the neuron group $\zeta$, ${\epsilon _p}(x,{x_d}) =   {\epsilon _{p1}}({x_d}) + {\epsilon _{p2}}(x,{x_d}) + {\epsilon _\zeta }({x_d})$ is the composite approximation error. 

Approximate $p(x_d)$ with the RBFNN and the control law \eqref{eq24} becomes  
\begin{equation}\label{eq26}
	\begin{aligned}
	u &= {k_n}{e_n} + {e_{n - 1}} + {{\hat W}^T}\Phi ({x_d}) \\
&= {k_n}{e_n} + {e_{n - 1}} + p({x,x_d}) - {{\tilde W}_\zeta^T}\Phi_\zeta ({x_d}) - {\epsilon _p}(x,{x_d}). 
	\end{aligned}
\end{equation}

The weight vector $ {\hat W} $ is updated by the gradient descent based method  
\begin{equation}\label{eq27}
\dot {\hat W} =  - \dot {\tilde W} = \Gamma \Phi ({x_d}){e_n}
\end{equation}
where $\Gamma  = {\Gamma ^T} > 0$ is a positive definite diagonal matrix. Similarly, ${{\hat W}_\zeta }$ is updated by 
\begin{equation}\label{eq28}
{{\dot {\hat W}}_\zeta } =  - {{\dot {\tilde W}}_\zeta } = {\Gamma _\zeta }{\Phi _\zeta }({x_d}){e_n}
\end{equation}
where ${\Gamma _\zeta }$ is is a positive definite diagonal matrix composed of corresponding entries of $\Gamma$. 

Consider the system \eqref{eq1}, the reference model \eqref{eq2}, the error definition \eqref{eq3}, the control law \eqref{eq26}, the weight update law \eqref{eq28}, and the closed-loop dynamics is expressed as  
\begin{equation}\label{eq29}
\left\{ \begin{aligned}
	&{{\dot e}_1} =  - {k_1}{e_1} + {e_2} \hfill \\
	&{{\dot e}_i} =  - {k_i}{e_i} + {e_{i + 1}} - {e_{i - 1}}, i=2,3,\ldots,n-1 \hfill \\
	&{{\dot e}_n} =  - g(x)\left( {{k_n}{e_n} + {e_{n - 1}} - \tilde W_\zeta ^T{\Phi _\zeta }({x_d}) - {\epsilon _p}(x,{x_d})} \right) \hfill \\
	&{{\dot {\tilde W}}_\zeta } =  - {\Gamma _\zeta }{\Phi _\zeta }({x_d}){e_n}. \hfill \\ 
\end{aligned}  \right.
\end{equation}

The matrix form of \eqref{eq29} can be formulated as follows  
\begin{equation}\label{eq30}
	\begin{aligned}
\left[ {\begin{array}{*{20}{c}}
		{\dot e} \\ 
		{{{\dot {\tilde W}}_\zeta }} 
\end{array}} \right] =& \left[ {\begin{array}{*{20}{c}}
		{A}&\vline & { bg(x)\Phi _\zeta ^T({x_d})} \\ 
		\hline
		{ - {\Gamma _\zeta }{\Phi _\zeta }({x_d}){b^T}}&\vline & {{0_{{N_\zeta } \times {N_\zeta }}}} 
\end{array}} \right]\left[ {\begin{array}{*{20}{c}}
		e \\ 
		{{{\tilde W}_\zeta }} 
\end{array}} \right] \\
&+ \left[ {\begin{array}{*{20}{c}}
		{ bg(x){\epsilon _p}(x,{x_d})} \\ 
		{{0_{{N_\zeta } \times 1}}} 
\end{array}} \right]
	\end{aligned}
\end{equation}
where $e = \left[ {{e_1},{e_2}, \ldots ,{e_n}} \right] \in {\mathbb{R}^n}$, $b = \left[ {\begin{array}{*{20}{c}}
		{{0_{(n - 1) \times 1}}} \\ 
		1 
\end{array}} \right] \in {\mathbb{R}^n}$, and $A$ can be expressed as  
\begin{equation}\label{eq31}
A = \left[ {\begin{array}{*{20}{c}}
		{ - {k_1}}&1&{}&{}&{} \\ 
		{ - 1}&{ - {k_2}}&1&{}&{} \\ 
		{}&{}& \ddots &{}&{} \\ 
		{}&{}&{ - 1}&{{-k_{n - 1}}}&1 \\ 
		{}&{}&{}&{ - g(x)}&{ - {k_n}g(x)} 
\end{array}} \right] \in {\mathbb{R}^{n \times n}}. 
\end{equation}

\begin{lemma}\label{lemma2}
Consider the nominal system of \eqref{eq30} as follows  
\begin{equation}\label{eq32}
	\begin{aligned}
		\left[ {\begin{array}{*{20}{c}}
				{\dot e} \\ 
				{{{\dot {\tilde W}}_\zeta }} 
		\end{array}} \right] = \left[ {\begin{array}{*{20}{c}}
				{A}&\vline & { bg(x)\Phi _\zeta ^T({x_d})} \\ 
				\hline
				{ - {\Gamma _\zeta }{\Phi _\zeta }({x_d}){b^T}}&\vline & {{0_{{N_\zeta } \times {N_\zeta }}}} 
		\end{array}} \right]\left[ {\begin{array}{*{20}{c}}
				e \\ 
				{{{\tilde W}_\zeta }} 
		\end{array}} \right] 
	\end{aligned}. 
\end{equation}
With Assumption \ref{assumption1}, the PE condition of ${{\Phi _\zeta }({x_d})}$ and a sufficiently large control gain $k_n$, \eqref{eq32} is exponentially stable, which means that the errors $e$ and ${\tilde W}_\zeta$ will exponentially converge to zero. 
\end{lemma}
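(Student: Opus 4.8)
The plan is to treat \eqref{eq32} as a nonautonomous linear system $\dot z = M(t)z$ with $z=[e^T,\tilde W_\zeta^T]^T$, where along any trajectory $g(x(t))$ stays in $[g_{0l},g_{0u}]$ with $|\dot g|\le g_{1u}$ by Assumption \ref{assumption1}, and $\Phi_\zeta(x_d(t))$ is bounded with bounded derivative since $x_d\in\Omega_d$ is compact. Structurally, \eqref{eq32} is the familiar cascade of a uniformly exponentially stable $e$-block driven by the estimator coupling $bg(x)\Phi_\zeta^T(x_d)$, together with the gradient update $\dot{\tilde W}_\zeta=-\Gamma_\zeta\Phi_\zeta(x_d)b^Te$; exponential stability of such an interconnection under persistent excitation of the coupling regressor is classical. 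Concretely I would proceed in two stages: first establish uniform stability, boundedness and square-integrability of $e$ through a Lyapunov function, and then upgrade this to uniform exponential stability using the PE condition on $\Phi_\zeta(x_d)$.

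For the first stage I would use $V=\tfrac12\sum_{i=1}^{n-1}e_i^2+\dfrac{e_n^2}{2g(x)}+\tfrac12\tilde W_\zeta^T\Gamma_\zeta^{-1}\tilde W_\zeta$, which by Assumption \ref{assumption1} is sandwiched between positive multiples of $\|e\|^2+\|\tilde W_\zeta\|^2$ (note $\Gamma_\zeta$ is diagonal positive definite, so $\Gamma_\zeta^{-1}$ is well defined). Differentiating along \eqref{eq32}, the cross terms $e_ie_{i+1}$ of the tridiagonal part telescope, the pair $e_{n-1}e_n$ generated by the $e_{n-1}$- and $e_n$-equations cancels, the term $e_n\tilde W_\zeta^T\Phi_\zeta(x_d)$ cancels against $\tilde W_\zeta^T\Gamma_\zeta^{-1}\dot{\tilde W}_\zeta$, and what remains is $\dot V=-\sum_{i=1}^{n-1}k_ie_i^2-\bigl(k_n+\tfrac{\dot g}{2g^2}\bigr)e_n^2\le-\sum_{i=1}^{n-1}k_ie_i^2-\bigl(k_n-\tfrac{g_{1u}}{2g_{0l}^2}\bigr)e_n^2$. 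Hence, for $k_n$ larger than $g_{1u}/(2g_{0l}^2)$ --- this is the ``sufficiently large control gain'' in the statement --- we get $\dot V\le-c\|e\|^2\le0$ for some $c>0$, so $V$ is nonincreasing, $e$ and $\tilde W_\zeta$ are bounded, and $\int_0^\infty\|e(\tau)\|^2\,d\tau<\infty$; since $\dot e$ is then also bounded, $e\to0$.

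For the second stage I would invoke the standard persistent-excitation lemma for cascades of this form, as used in the deterministic-learning literature (e.g. \cite{r7,r33} and the references therein): writing \eqref{eq32} as $\dot z=M(t)z$, the $e$-block is uniformly exponentially stable in the metric induced by $V$, the coupling matrix is built from $g(x(t))\Phi_\zeta(x_d(t))$ which is bounded, $\mathcal{C}^1$ with bounded derivative, and persistently exciting (Assumption \ref{assumption3}, whose use is justified in Remark \ref{remark2}); these are exactly the hypotheses under which the origin of such an interconnection is globally uniformly exponentially stable, i.e. $\|z(t)\|\le\kappa\,e^{-\lambda(t-t_0)}\|z(t_0)\|$ for constants $\kappa,\lambda>0$. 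Equivalently, one can exhibit the decay directly by augmenting $V$ with a small cross term proportional to $-e^Tb\,\Phi_\zeta^T(x_d)\tilde W_\zeta$ (or with the classical two-step estimate over an interval of length $\delta$) and choosing its magnitude small enough that its sign-indefinite contribution is dominated by $-c\|e\|^2$ and the PE-generated negativity in $\tilde W_\zeta$. Either route yields exponential convergence of $e$ and $\tilde W_\zeta$ to zero, as claimed.

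The main obstacle is the uniformity in the second stage: unlike the textbook setting, the ``$A$'' block in \eqref{eq32} is not a fixed Hurwitz matrix but depends on the state through $g(x)$, so one must check that the rate $\lambda$ and overshoot $\kappa$ can be chosen independently of the particular trajectory. This is precisely where the uniform bounds $g_{0l}\le g(x)\le g_{0u}$ and $|\dot g(x)|\le g_{1u}$ from Assumption \ref{assumption1}, the compactness of $\Omega_d$ (hence uniform bounds on $\Phi_\zeta(x_d)$ and its derivative), and the uniform PE level from Assumption \ref{assumption3} all enter; collapsing these into a single pair of constants $\kappa,\lambda$ is the only delicate point, the rest being the routine Lyapunov bookkeeping sketched above.
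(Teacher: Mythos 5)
Your proposal is correct, but it necessarily takes a different route from the paper, because the paper's entire proof of Lemma~2 is a one-line citation to Theorem~1 of \cite{r35}; you instead reconstruct the underlying argument. Your reconstruction is sound and, notably, coincides with the machinery the paper itself deploys later for the RTPL analogue (Lemma~4): your Lyapunov function $V=\tfrac12\sum_{i=1}^{n-1}e_i^2+e_n^2/(2g(x))+\tfrac12\tilde W_\zeta^T\Gamma_\zeta^{-1}\tilde W_\zeta$ is the $e$-coordinate version of the paper's $\tfrac12 z^THz$ with $z_n=e_n/g(x)$, your computation $\dot V=-\sum_{i=1}^{n-1}k_ie_i^2-\bigl(k_n+\tfrac{\dot g}{2g^2}\bigr)e_n^2$ is correct (the telescoping and the two cancellations check out), and your gain condition $k_n>g_{1u}/(2g_{0l}^2)$ is exactly the paper's condition \eqref{eq63}. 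Your second stage, where PE upgrades square-integrability of $e$ to exponential stability of the full cascade, is left at the level of invoking the classical PE lemma (with a sketch of the cross-term or interval-based alternative); this is precisely the step the paper also outsources (to Theorem~1 of \cite{r35} here, and to Lemma~1 of \cite{r35} together with Theorem~4.12 of \cite{r36} in Lemma~4), so no gap arises relative to the paper's own standard of rigor. What your version buys is an explicit identification of where Assumption~\ref{assumption1} and the compactness of $\Omega_d$ enter and why the gain threshold has the stated form; what the paper's citation buys is brevity and a uniform statement of the PE-based exponential stability result that it can reuse for both Lemma~2 and Lemma~4.
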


\begin{proof}
The proof has been given in Theorem 1 of \cite{r35}. 
\end{proof}

\begin{theorem}\label{theorem1}
Consider the system \eqref{eq30}. With Assumption \ref{assumption1}, the PE condition of ${{\Phi _\zeta }({x_d})}$ and properly designed control gains, $e$ and ${\tilde W}_\zeta$ will exponentially converge to small neighborhoods around zero, which can be arbitrarily contracted with the increase of ${k_1},{k_2}, \ldots ,{k_n}$ and ${\Gamma _\zeta }$. 
\end{theorem}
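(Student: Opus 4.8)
The plan is to treat \eqref{eq30} as a perturbation of the exponentially stable nominal system \eqref{eq32} and to combine the robustness of exponential stability with a standard input-to-state estimate. First I would invoke Lemma \ref{lemma2}: under Assumption \ref{assumption1}, the PE condition of $\Phi_\zeta(x_d)$, and a sufficiently large $k_n$, the nominal system \eqref{eq32} is exponentially stable. Writing $z = [e^T,\tilde W_\zeta^T]^T$ and noting that the system matrix in \eqref{eq32} is piecewise continuous and uniformly bounded in $t$ (because $x_d$ is bounded and, on the compact set in which the closed-loop trajectory evolves, $g(x)$ and $\Phi_\zeta(x_d)$ are bounded by Assumption \ref{assumption1} and \eqref{eq11}), a converse Lyapunov theorem for exponentially stable systems provides a function $V(t,z)$ with $c_1\|z\|^2 \le V \le c_2\|z\|^2$, $\dot V \le -c_3\|z\|^2$ along \eqref{eq32}, and $\left\|\partial V/\partial z\right\| \le c_4\|z\|$, for positive constants $c_1,\dots,c_4$.

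Second, I would differentiate $V$ along \eqref{eq30}. The only term added relative to the nominal case comes from $bg(x)\epsilon_p(x,x_d)$ entering the $\dot e_n$-equation, and its contribution to $\dot V$ is bounded in magnitude by $c_4 g_{0u}\,|\epsilon_p|\,\|z\|$. Recalling $\epsilon_p = \epsilon_{p1} + \epsilon_{p2} + \epsilon_\zeta$ with $\epsilon_{p2}(x,x_d) = p(x,x_d) - p(x_d)$ and using Lemma \ref{lemma1} --- so that $|\epsilon_{p2}| \le \bar K_p\|e\| \le \bar K_p\|z\|$, where $\bar K_p$ absorbs $K_p$ together with the bounded $g$-weighting of the error vector in \eqref{eq5} --- I would split the perturbation into a state-dependent part of size at most $\bar K_p\|z\|$ and a residual bounded by $\bar\epsilon^{*} := \epsilon_{p1}^{*} + \epsilon_\zeta^{*}$, which is fixed by the network hyperparameters and not by the control gains. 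This gives $\dot V \le -(c_3 - c_4 g_{0u}\bar K_p)\|z\|^2 + c_4 g_{0u}\bar\epsilon^{*}\|z\|$ along \eqref{eq30}.

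Third --- the step where ``properly designed control gains'' enters --- I would argue that the nominal decay rate $c_3$ can be made large relative to $c_4 g_{0u}\bar K_p$ by enlarging the damping gains $k_1,\dots,k_n$ in $A$ and the adaptation gain $\Gamma_\zeta$ (which speeds up the weight error along the directions excited by the PE condition), so that $\rho := c_3 - c_4 g_{0u}\bar K_p > 0$. Then for $\|z\| \ge 2 c_4 g_{0u}\bar\epsilon^{*}/\rho$ one has $\dot V \le -\frac{\rho}{2}\|z\|^2 \le -\frac{\rho}{2c_2}V$, and the comparison lemma yields $\|z(t)\| \le \sqrt{c_2/c_1}\,\|z(0)\|\,e^{-\rho t/(4c_2)} + \kappa\,\bar\epsilon^{*}$; hence $e$ and $\tilde W_\zeta$ converge exponentially to a ball of radius $O(\bar\epsilon^{*})$. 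Examining how $\kappa$ and this radius depend on the gains then yields the ``arbitrarily contracted'' conclusion, since enlarging $k_1,\dots,k_n$ and $\Gamma_\zeta$ increases $\rho$ and tightens the relevant constants, driving the radius to zero.

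I expect the main obstacle to be precisely this last quantitative dependence, not the convergence-to-a-ball part, which is a routine perturbation-of-exponential-stability argument once Lemma \ref{lemma2} supplies the Lyapunov function. The abstract converse Lyapunov function hides the constants $c_1,\dots,c_4$, so to make the contraction claim rigorous one must either reopen the explicit Lyapunov construction underlying the proof of Lemma \ref{lemma2} (Theorem 1 of \cite{r35}) to expose how $c_3$ and $c_4/c_1$ scale with $k_1,\dots,k_n$ and $\Gamma_\zeta$, or introduce a suitable rescaling of $(e,\tilde W_\zeta)$ and of time. A secondary technicality is the a priori boundedness of $x$ needed to invoke Lemma \ref{lemma1} and the bounds on $g$ in Assumption \ref{assumption1}; this is handled by a standard invariant-set argument showing that sufficiently small tracking errors keep $x$ inside a compact neighborhood of $\Omega_d$, closing the loop.
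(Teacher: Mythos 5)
Your proposal is correct and follows essentially the same route as the paper: the paper also treats \eqref{eq30} as a perturbation of the exponentially stable nominal system \eqref{eq32} (Lemma \ref{lemma2}), splits $\epsilon_p$ via Lemma \ref{lemma1} into a vanishing term $bg(x)\epsilon_{p2}$ and a nonvanishing term $bg(x)(\epsilon_{p1}+\epsilon_\zeta)$, and concludes by invoking Lemmas 9.1 and 9.2 of \cite{r36}, whose converse-Lyapunov-plus-comparison argument is exactly what you wrote out explicitly. The quantitative gain-dependence you flag (how $c_3$, $c_4$ scale with $k_1,\ldots,k_n$ and $\Gamma_\zeta$, needed for the ``arbitrarily contracted'' claim) is likewise left implicit in the paper, so your account is, if anything, more candid about that step.
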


\begin{proof}
According to Lemma \ref{lemma1}, we obtain $\left| {{\epsilon _{p2}}(x,{x_d})} \right| = \left| {p(x,{x_d}) - p({x_d})} \right| \leqslant {K_p}\left\| e \right\|,\forall x \in {\Omega _x}$. Since ${\epsilon _{p2}}(x,{x_d}) \leqslant {K_p}\left\| e \right\|$, $\left| {{\epsilon _{p1}}({x_d})} \right| < \epsilon _{p1}^ * $, $\left| {{\epsilon _\zeta }({x_d})} \right| < \epsilon _\zeta ^*$, the system \eqref{eq30} can be regarded as a perturbed system whose perturbation includes a vanishing term $bg(x) \epsilon_{p2}(x,x_d)$ and a nonvanishing term $bg(x)(\epsilon_{p1}(x_d)+\epsilon_{\zeta}(x_d))$. 

Since Lemma 2 indicates that the nominal part of \eqref{eq30} is exponentially stable with a sufficiently large $k_n$, Lemma 9.1 and 9.2 of \cite{r36} are used to analyze the influence of $\epsilon_p$ to the nominal system \eqref{eq32}. It is shown that, with sufficiently large control gains ${k_1},{k_2}, \ldots ,{k_n}$ and ${\Gamma _\zeta }$, the errors $e$ and ${\tilde W}_\zeta$ will exponentially converge to small neighborhoods around zero. Moreover, the neighborhoods can be arbitrarily contracted by increasing the gains ${k_1},{k_2}, \ldots ,{k_n}$ and ${\Gamma _\zeta }$. 
\end{proof}

After the transient process, the converged weight vector $\hat{W}$ can approximate $p(x_d)$ accurately along the trajectory of $x_d$. The following integral strategy is adopted to obtain the learned knowledge  
\begin{equation}\label{eq33}
\bar W = \frac{1}{{\Delta t}}\int_{{t_0}}^{{t_0} + \Delta t} {\hat W(\tau )} d\tau 
\end{equation}
where $\left[ {{t_0},{t_0} + \Delta t} \right]$ represents a time interval after the transient process. Thus, the unknown function $p(x_d)$ can be approximated as follows  
\begin{equation}\label{eq34}
\hat p({x_d}) = {{\bar W}^T}\Phi ({x_d})
\end{equation}
which can be used to design the feedforward control for the same tracking control task as in the learning phase.

Although the approach mentioned above realizes a basic knowledge learning-reuse framework, there are still some problems to be solved: 
\begin{enumerate}[leftmargin=*]
\item{Good learning performance depends heavily on high-level PE, which requires the high-frequency recurrence of the input signal, so this SGD based learning control only performs well when the reference trajectory $x_d$ is repetitive \cite{r8, r13, r14, r37}. 
 }

\item {As the PE levels are also influenced by receptive field width of the RBFNN, the width has to be designed properly (usually small enough), which possibly weakens the approximation capability of the RBFNN \cite{r38, r39}. 
}
\end{enumerate}

\subsection{Passive Knowledge Forgetting in Closed-Loop Systems}\label{section4-2}
The causes of the aforementioned problems can be attributed to the algorithm's heavy reliance on high-level PE condition. To provide an intuitive explanation for the reliance, the concept of passive knowledge forgetting proposed in \cite{r25} is considered. However, it is important to note that the analysis of passive knowledge forgetting in a closed-loop system cannot be directly performed by analyzing the objective function as in open-loop systems. In this paper, we adopt an illustrative approach to analyze the impact of passive knowledge forgetting on the learning process in closed-loop systems. 

Consider a special case of the control problem mentioned in Section \ref{section2} where ${x_d} = {\left[ {{x_{d1}},{x_{d2}}} \right]^T} \in {\mathbb{R}^2}$. Let $\varphi_d$ denote the trajectory of $x_d$, and a special $\varphi_d$ is considered, which involves two phases as shown in Fig. \ref{fig1}: i) a repetitive phase ${\varphi _{d1}}$ and ii) a new exploration phase ${\varphi _{d2}}$. 

\begin{figure}[htbp]
	\centering
	\includegraphics[scale=0.5]{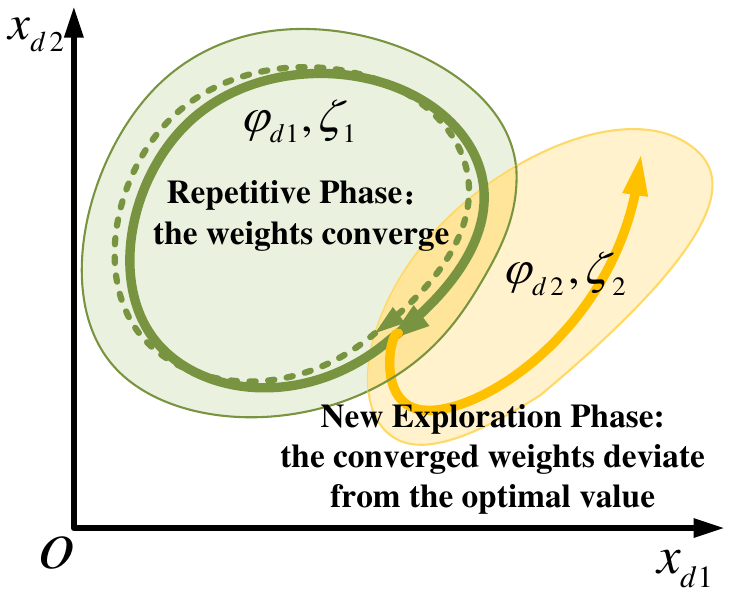}
	\caption{Passive knowledge forgetting phenomenon}\vspace{-1em}\label{fig1}
\end{figure}

According to Property \ref{property2} and Assumption \ref{assumption2}, the approximations of the unknown function $p(x_d)$ along ${\varphi _{d1}}$ and ${\varphi _{d2}}$ are performed by two different groups of neurons which have intersections. Let the subscripts ${\zeta _1}$ and ${\zeta _2}$ represent these two groups of neurons close to the reference trajectories ${\varphi _{d1}}$ and ${\varphi _{d2}}$, respectively, and the unknown function $p(x_d)$ can be approximated as  
\begin{equation}\label{eq35}
\left\{ \begin{gathered}
	p({x_d}) = W_{{\zeta _1}}^{ * T}{\Phi _{{\zeta _1}}}({x_d}) + {\epsilon _{{\zeta _1}}}({x_d}),\forall {x_d} \in {\varphi _{d1}} \hfill \\
	p({x_d}) = W_{{\zeta _2}}^{ * T}{\Phi _{{\zeta _2}}}({x_d}) + {\epsilon _{{\zeta _2}}}({x_d}),\forall {x_d} \in {\varphi _{d2}} \hfill \\ 
\end{gathered}  \right.
\end{equation}
where ${\epsilon _{{\zeta _1}}}({x_d})$ and ${\epsilon _{{\zeta _2}}}({x_d})$ satisfying $\left| {{\epsilon _{{\zeta _1}}}({x_d})} \right| < \epsilon _{{\zeta _1}}^ * $, $\left| {{\epsilon _{{\zeta _2}}}({x_d})} \right| < \epsilon _{{\zeta _2}}^ * $ are bounded approximation errors. In addition, the two shaded areas in Fig. \ref{fig1} represent the approximation domains of neuron groups ${\zeta _1}$ and ${\zeta _2}$, respectively, which means that the local approximation of $p(x_d)$ over these regions will be influenced by ${{\hat W}_{{\zeta _1}}}$ or ${{\hat W}_{{\zeta _2}}}$. 

Consider the SGD based NNLC proposed in Section \ref{section4-1}. According to Theorem \ref{theorem1}, the weight subvector ${{\hat W}_{{\zeta _1}}}$ will converge to a small neighborhood around $W_{{\zeta _1}}^ * $ in the repetitive phase. Thus, the locally accurate knowledge about $p(x_d)$ along the trajectory ${\varphi _{d1}}$ is learned by the RBFNN. However, when the repetitive phase ends and the reference trajectory ${\varphi _d}$ enters the new exploration phase ${\varphi _{d2}}$, the weights of the neuron group ${\zeta _2}$, which has intersections with ${\zeta _1}$, will have the tendency to converge to $W_{{\zeta _2}}^ * $. Therefore, some entries of ${{\hat W}_{{\zeta _1}}}$, which also take part in the approximation of $p(x_d)$ along ${\varphi _{d2}}$, will gradually deviate from the previous optimal value $W_{{\zeta _1}}^ * $, and the learned knowledge in the repetitive phase will be gradually forgotten in the subsequent learning control process. 

It should be noted that the passive knowledge forgetting phenomenon happens not only in the special case illustrated in Fig. \ref{fig1}, but at any time in arbitrary tracking control tasks. Specifically, according to Property \ref{property2}, when the reference trajectory ${\varphi _d}$ visits a small region ${\Omega _{dj}},j = 1,2, \ldots ,M$ of the reference state space $\Omega_d $, the corresponding weight subvector ${{\hat W}_j}$ will have the tendency to converge to its optimal value $W_j^ * $, which provides locally accurate approximation of $p(x_d)$ over the small region ${\Omega _{dj}}$, but as the reference trajectory moves away from ${\Omega _{dj}}$, the weight subvector ${{\hat W}_j}$ will gradually lose the convergence tendency to $W_j^ * $ and deviate from its previous converged value. Therefore, the learning process of the SGD based NNLC is always accompanied by the passive knowledge forgetting phenomenon, which is also the reason why its learning speed is limited. The passive knowledge forgetting also provides an insight into the necessity of high-level PE: high-level PE ensures that forgotten knowledge is persistently reviewed, while low-level PE does not.

\begin{remark}\label{remark4}
Interestingly, the influence of passive knowledge forgetting is related to hyperparameter settings of the RBFNN. When the receptive field widths ${\sigma _i},i = 1,2, \ldots ,N$ are designed to be relatively small, the intersections of the two approximation domains in Fig. \ref{fig1} will become very small such that the learning along $\varphi_{d2}$ hardly affects the learned knowledge along $\varphi_{d1}$. On the contrary, when ${\sigma _i}$ is relatively large, the passive knowledge forgetting phenomenon will become severe such that the convergence of the weight vector becomes very slow and that is the reason why the SGD based NNLC is sensitive to the setting of the receptive field widths. 
\end{remark}

\section{Real-Time Progressive Learning}\label{section5}
In this section, a filter-free learning control scheme named RTPL is proposed based on SMRLS \cite{r25} to improve the performance of the SGD based NNLC mentioned in Section \ref{section4}. Theoretical analysis demonstrates the exponential stability of the closed-loop system under Assumption \ref{assumption3}. 

\subsection{Learning From Feedback Control}\label{section5-1}
Although SMRLS can suppress the passive knowledge forgetting phenomenon in the real-time learning process, it is originally proposed to address the supervised learning problem in open-loop systems where the function value to be approximated can be measured directly \cite{r25}. However, accurate measurement of the function value is usually unachievable in closed-loop systems. 

According to iterative learning control, we can set the feedforward controller output of the current iteration to a linear combination of tracking errors and feedforward output of the last iteration such that the control performance is improved \cite{r25_1, r25_2, r25_3}. Inspired by this idea, the desired output of the RBFNN at time $t$ is set to the sum of measurable tracking errors and the output of the RBFNN itself. Substitute $W = \hat W$, $\chi  = {x_d}$ and $h = u$ into \eqref{eq17}, and the weight update law of RTPL is expressed as  
\begin{equation}\label{eq36}
\left\{ \begin{gathered}
	\dot {\hat W} = P\Phi ({x_d})\left[ {F - {{\hat W}^T}\Phi ({x_d})} \right] - {\epsilon _a} \hfill \\
	{{\dot P}^{ - 1}} = \Phi ({x_d}){\Phi ^T}({x_d}) - {\Phi _a}\Phi _a^T \hfill \\ 
\end{gathered}  \right.
\end{equation}
where $F = \eta {e_n} + {{\hat W}^T}\Phi ({x_d})$ is the estimated desired output of the RBFNN feedforward controller with $\eta>0$ being a designed gain, ${\epsilon _a} = P{\Phi _a}({F_a} - {\hat W^T}{\Phi _a})$ with $F_a$ being the latest record of $F$ within the current partition before the current time. For convenience of theoretical analysis, substitute \eqref{eq26} into \eqref{eq36} and the update law becomes  
\begin{equation}\label{eq37}
	\left\{ \begin{gathered}
		\dot {\hat W} = \eta P\Phi ({x_d}){e_n} - {\epsilon _a} \hfill \\
		{{\dot P}^{ - 1}} = \Phi ({x_d}){\Phi ^T}({x_d}) - {\Phi _a}\Phi _a^T.  \hfill \\ 
	\end{gathered}  \right.
\end{equation}

Fig. \ref{fig2} shows the closed-loop structure of the RTPL based hybrid feedforward feedback control. 
\begin{figure}[htbp]
	\centering
	\includegraphics[scale=0.5]{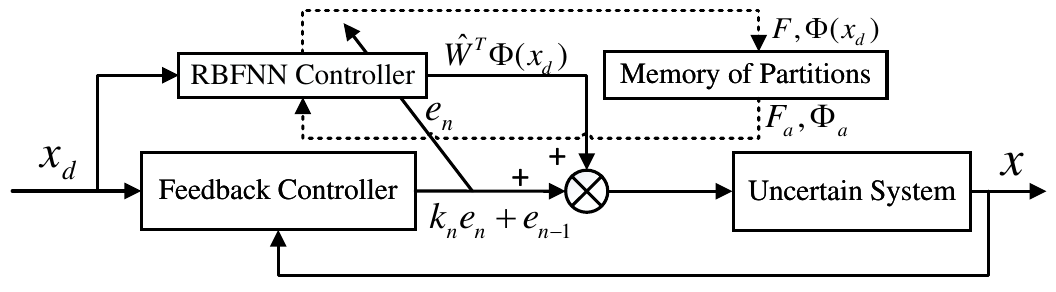}
	\caption{RTPL based hybrid feedforward feedback control}\vspace{-1em}\label{fig2}
\end{figure}

\begin{remark}\label{remark5}
Although SMRLS can effectively improve the learning performance of the RBFNN, its computational complexity is higher than that of the SGD algorithm. In addition, the implementation of RTPL requires extra memory to store the past data. For RTPL based control scheme in Fig. \ref{fig2}, each partition of the input space, i.e. $\Omega^j_d, j=1,2,\ldots,N_P$, needs to store the latest regressor vector $\Phi_j$ and desired output $F_j$ within the partition before the current time \cite{r25}. 
\end{remark}

\subsection{Exponential Stability under Partial PE Condition}\label{section5-2}
With Assumption \ref{assumption3}, only the PE condition of $\Phi_\zeta(x_d)$ is guaranteed, so the update of $\hat{W}_\zeta$ needs to be analyzed separately. Through elementary row transformation, $\hat{W}$ and $\Phi(x_d)$ can be transformed into the following form  
\begin{equation}\label{eq38}
	\left\{ \begin{gathered}
		{S_\zeta }\hat W = \left[ {\begin{array}{*{20}{c}}
				{{{\hat W}_\zeta }} \\ 
				{{{\hat W}_{\bar \zeta }}} 
		\end{array}} \right] \hfill \\
		{S_\zeta }\Phi ({x_d}) = \left[ {\begin{array}{*{20}{c}}
				{{\Phi _\zeta }({x_d})} \\ 
				{{\Phi _{\bar \zeta }}({x_d})} 
		\end{array}} \right] \hfill \\ 
	\end{gathered}  \right.
\end{equation}
where ${S_\zeta } \in {\mathbb{R}^{N \times N}}$ is an orthogonal constant matrix satisfying $S_\zeta ^{ - 1} = S_\zeta ^T$, and the subscript ${\bar \zeta }$ represents the neurons which are far from the trajectory of $x_d$ and barely affect the approximation. Therefore, the following equations are obtained 
\begin{equation}\label{eq39}
	\left\{ \begin{gathered}
		\left[ {\begin{array}{*{20}{c}}
				{{{\dot {\tilde W}}_\zeta }} \\ 
				{{{\dot {\tilde W}}_{\bar \zeta }}} 
		\end{array}} \right] = {S_\zeta }\dot {\tilde W} \hfill \\
		\Phi ({x_d}) = S_\zeta ^{ - 1}\left[ {\begin{array}{*{20}{c}}
				{{\Phi _\zeta }({x_d})} \\ 
				{{\Phi _{\bar \zeta }}({x_d})} 
		\end{array}} \right] \hfill \\ 
	\end{gathered}  \right.
\end{equation}
where ${{\tilde W}_{\bar \zeta }} = W_{\bar \zeta }^ *  - {{\hat W}_{\bar \zeta }}$. Substitute $\tilde{W}=W^* - \hat{W}$ into \eqref{eq37} and we obtain  
\begin{equation}\label{eq40}
\dot{\tilde{W}}=-\dot {\hat W} = -\eta P\Phi ({x_d})e_n + {\epsilon _a} 
\end{equation}

Substitute \eqref{eq40} into \eqref{eq39} and the following weight update law is obtained  
\begin{equation}\label{eq41}
\begin{aligned}
	\left[ {\begin{array}{*{20}{c}}
			{{{\dot {\tilde W}}_\zeta }} \\ 
			{{{\dot {\tilde W}}_{\bar \zeta }}} 
	\end{array}} \right] &=  - \eta {S_\zeta }PS_\zeta ^{ - 1}\left[ {\begin{array}{*{20}{c}}
			{{\Phi _\zeta }({x_d})} \\ 
			{{\Phi _{\bar \zeta }}({x_d})} 
	\end{array}} \right]e_n + {S_\zeta }{\epsilon _a} \hfill \\
	&=  - \eta {S_\zeta }PS_\zeta ^T\left[ {\begin{array}{*{20}{c}}
			{{\Phi _\zeta }({x_d})} \\ 
			{{\Phi _{\bar \zeta }}({x_d})} 
	\end{array}} \right]e_n + {S_\zeta }{\epsilon _a} \hfill \\
	&=  - \eta {P_S}\left[ {\begin{array}{*{20}{c}}
			{{\Phi _\zeta }({x_d})} \\ 
			{{\Phi _{\bar \zeta }}({x_d})} 
	\end{array}} \right]e_n + {S_\zeta }{\epsilon _a} \hfill \\ 
\end{aligned} 
\end{equation}
where ${P_S} = {S_\zeta }PS_\zeta ^T$ is a positive definite matrix.

According to Property \ref{property2}, the values of ${{\Phi _{\bar \zeta }}({x_d})}$ are close to zero during the learning control process. In this case, we have  
\begin{equation}\label{eq42}
	{{\dot {\tilde W}}_\zeta } =  - \eta {P_\zeta }{\Phi _\zeta }({x_d})e_n + {\epsilon _W}
\end{equation}
where ${{\epsilon _W}\in {\mathbb{R}^{{N_\zeta }}}}$ satisfying $\left\| {{\epsilon _W}} \right\| < \epsilon _W^ * $ is a small filtered approximation error, and ${P_\zeta } \in {\mathbb{R}^{{N_\zeta } \times {N_\zeta }}}$ is a principal submatrix of $P_S$. Since $P_S$ is positive definite, ${P_\zeta }$ is also positive definite. 

Replace the weight update law of \eqref{eq29} with \eqref{eq42}, and the closed-loop dynamics of the RTPL based system is obtained  
\begin{equation}\label{eq43}
	\left\{ \begin{aligned}
		&{{\dot e}_1} =  - {k_1}{e_1} + {e_2} \hfill \\
		&{{\dot e}_i} =  - {k_i}{e_i} + {e_{i + 1}} - {e_{i - 1}}, i=2,3,\ldots,n-1 \hfill \\
		&{{\dot e}_n} =  - g(x)\left( {{k_n}{e_n} + {e_{n - 1}} - \tilde W_\zeta ^T{\Phi _\zeta }({x_d}) - {\epsilon _p}(x,{x_d})} \right) \hfill \\
		&{{\dot {\tilde W}}_\zeta } =  - \eta {P_\zeta }{\Phi _\zeta }({x_d})e_n + {\epsilon _W} \hfill \\ 
	\end{aligned}  \right.
\end{equation}

The matrix form of \eqref{eq43} is formulated as  
\begin{equation}\label{eq44}
	\begin{aligned}
		\left[ {\begin{array}{*{20}{c}}
				{\dot e} \\ 
				{{{\dot {\tilde W}}_\zeta }} 
		\end{array}} \right] =& \left[ {\begin{array}{*{20}{c}}
				{A}&\vline & { bg(x)\Phi _\zeta ^T({x_d})} \\ 
				\hline
				{ - \eta {P _\zeta }{\Phi _\zeta }({x_d}){b^{T}}}&\vline & {{0_{{N_\zeta } \times {N_\zeta }}}} 
		\end{array}} \right]\left[ {\begin{array}{*{20}{c}}
				e \\ 
				{{{\tilde W}_\zeta }} 
		\end{array}} \right] \\
		&+ \left[ {\begin{array}{*{20}{c}}
				{ bg(x){\epsilon _p}(x,{x_d})} \\ 
				{\epsilon_W} 
		\end{array}} \right]
	\end{aligned}
\end{equation}
where $A$, $b$ follow the same definition as in \eqref{eq30}. The main difference between \eqref{eq30} and \eqref{eq44} is the time varying gain matrix $P _\zeta$. 

\begin{lemma}\label{lemma3}
Consider the positive definite gain matrix $P_\zeta$ of \eqref{eq42}. There exists a constant $\sigma  > 0$ such that ${X^T}\dot P_\zeta ^{ - 1}X \leqslant \sigma {\left\| X \right\|^2}$ holds for any $X \in {\mathbb{R}^{{N_\zeta }}}$. 
\end{lemma}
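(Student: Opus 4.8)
The plan is to reduce the claim to two uniform bounds --- a two-sided spectral bound on $P_\zeta$ inherited from Assumption~\ref{assumption5}, and a bound on $\|\dot P_\zeta\|$ coming from the boundedness of $P$ and of the Gaussian regressors --- and then combine them through the identity $\dot P_\zeta^{-1}=-P_\zeta^{-1}\dot P_\zeta P_\zeta^{-1}$, which is valid because $P_\zeta$ is symmetric and nonsingular.

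First I would transfer Assumption~\ref{assumption5} to $P_\zeta$. Since $S_\zeta$ is a constant orthogonal matrix, $P_S=S_\zeta P S_\zeta^T$ has the same spectrum as $P$, so $q_0\leqslant\lambda_{\min}(P_S)\leqslant\lambda_{\max}(P_S)\leqslant p_0$; as $P_\zeta$ is the $\zeta\zeta$ principal block of $P_S$, for any unit vector $x\in\mathbb{R}^{N_\zeta}$, writing $\bar x=(x^T,\,0)^T\in\mathbb{R}^N$ for its zero-padding, we have $x^T P_\zeta x=\bar x^T P_S\bar x$, whence $q_0\leqslant\lambda_{\min}(P_\zeta)\leqslant\lambda_{\max}(P_\zeta)\leqslant p_0$ and in particular $\lambda_{\max}(P_\zeta^{-1})\leqslant 1/q_0$. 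Next I would bound $\dot P_\zeta$: differentiating $P_S=S_\zeta P S_\zeta^T$ gives $\dot P_S=S_\zeta\dot P S_\zeta^T$, and from \eqref{eq37} together with $\dot P=-P\dot P^{-1}P$ we obtain $\dot P=-P\big(\Phi(x_d)\Phi^T(x_d)-\Phi_a\Phi_a^T\big)P$. Because every Gaussian basis function lies in $(0,1]$, there is a constant $\phi_M$ (for instance $\phi_M=\sqrt N$) with $\|\Phi(x_d)\|\leqslant\phi_M$ and $\|\Phi_a\|\leqslant\phi_M$, so $\|\dot P\|\leqslant 2\phi_M^2p_0^2$; orthogonal invariance of the spectral norm gives $\|\dot P_S\|=\|\dot P\|$, and since $\dot P_\zeta$ is a principal submatrix of $\dot P_S$ (extracting a fixed sub-block commutes with the time derivative), $\|\dot P_\zeta\|\leqslant\|\dot P_S\|\leqslant 2\phi_M^2p_0^2$.

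The two bounds then close the argument: for any $X\in\mathbb{R}^{N_\zeta}$, setting $Y=P_\zeta^{-1}X$ gives $X^T\dot P_\zeta^{-1}X=-Y^T\dot P_\zeta Y\leqslant\|\dot P_\zeta\|\,\|Y\|^2\leqslant\|\dot P_\zeta\|\,\lambda_{\max}^2(P_\zeta^{-1})\,\|X\|^2\leqslant\big(2\phi_M^2p_0^2/q_0^2\big)\|X\|^2$, so the lemma holds with $\sigma=2\phi_M^2p_0^2/q_0^2$.

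The main difficulty is conceptual rather than computational: one must resist reading $\dot P_\zeta^{-1}$ off the clean formula $\dot P_S^{-1}=S_\zeta\big(\Phi(x_d)\Phi^T(x_d)-\Phi_a\Phi_a^T\big)S_\zeta^T$, because $P_\zeta^{-1}$ is the inverse of a principal submatrix of $P_S$, not the corresponding principal submatrix of $P_S^{-1}$, so the argument has to route through $\dot P_\zeta$ itself and the identity $\dot P_\zeta^{-1}=-P_\zeta^{-1}\dot P_\zeta P_\zeta^{-1}$. A minor point is that the reduction ``$\Phi_{\bar\zeta}(x_d)\approx 0$'' and the residual $\epsilon_W$ used in passing from \eqref{eq41} to \eqref{eq42} are irrelevant here, since the statement is a purely matrix-analytic property of $P_\zeta$ viewed as a principal submatrix of the uniformly bounded, uniformly bounded-derivative matrix $P_S$.
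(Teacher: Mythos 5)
Your proof is correct, and it takes a genuinely different route from the paper's. The paper works with the inverse update directly: it writes $\dot P_\zeta ^{-1} = P_\zeta ^{-1}U_\zeta P_S \dot P_S^{-1}P_S U_\zeta ^T P_\zeta ^{-1}$, splits $\dot P_S^{-1}=S_\zeta \bigl(\Phi \Phi ^T-\Phi _a\Phi _a^T\bigr)S_\zeta ^T$ into its $\zeta\zeta$ block $B_1=\Phi _\zeta \Phi _\zeta ^T-\Phi _{a\zeta }\Phi _{a\zeta }^T$ (whose contribution collapses exactly to $B_1$, bounded by $N$) plus a remainder built from $\Phi _{\bar \zeta }\approx 0$, whose eigenvalues are argued small via the Gershgorin circle theorem and absorbed into an assumed constant $\gamma$, yielding $\sigma = N+\gamma (p_0/q_0)^2$. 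You instead bound $\dot P_\zeta$ itself, using $\dot P=-P\dot P^{-1}P$ with \eqref{eq37}, the bound $\|\Phi \|,\|\Phi _a\|\leqslant \sqrt N$ for Gaussian regressors, orthogonal invariance and submatrix monotonicity of the spectral norm, and only then pass to the inverse via $\dot P_\zeta ^{-1}=-P_\zeta ^{-1}\dot P_\zeta P_\zeta ^{-1}$, obtaining the explicit constant $\sigma =2Np_0^2/q_0^2$. What your route buys: it is fully rigorous given only Assumption \ref{assumption5} and boundedness of the regressors — it needs neither the qualitative step ``$\Phi _{\bar \zeta }$ is close to zero, hence assume $\lambda _{\max }(\dot P_{S\bar \zeta }^{-1})\leqslant \gamma$'' nor the eigenvalue manipulation in \eqref{eq52}, where the equality $\lambda _{\max }(\Phi _\zeta \Phi _\zeta ^T-\Phi _{a\zeta }\Phi _{a\zeta }^T)=\lambda _{\max }(\Phi _\zeta \Phi _\zeta ^T)-\lambda _{\max }(\Phi _{a\zeta }\Phi _{a\zeta }^T)$ holds only as an inequality in general (the final bound $\leqslant N$ is still fine). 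What the paper's route buys is a potentially much smaller $\sigma$, since its leading term is $N$ rather than $2Np_0^2/q_0^2$ with $p_0$ chosen large, which matters quantitatively in the gain condition \eqref{eq73} of Lemma \ref{lemma4}, though for the existence claim of Lemma \ref{lemma3} both constants serve equally well. Your closing caution — that $\dot P_\zeta ^{-1}$ is not the $\zeta\zeta$ block of $\dot P_S^{-1}$ — is exactly the pitfall the paper sidesteps with its $U_\zeta$-sandwich identity; your detour through $\dot P_\zeta$ is an equally valid and arguably cleaner resolution.
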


\begin{proof}
To prove this lemma, we only need to determine an upper bound of the maximum eigenvalue of $\dot{P}_\zeta^{-1}$. Since ${P_S} = {S_\zeta }PS_\zeta ^T$, where $S_\zeta$ is an orthogonal constant matrix and $P$ is a positive definite matrix, we obtain  
\begin{equation}\label{eq45}
	\dot P_S^{ - 1} = {S_\zeta }{{\dot P}^{ - 1}}S_\zeta ^T. 
\end{equation}
Substitute the update law of $P$ in \eqref{eq37} into \eqref{eq45}  
\begin{equation}\label{eq46}
	\dot P_S^{ - 1} = {S_\zeta }\Phi ({x_d}){\Phi ^T}({x_d})S_\zeta ^T - {S_\zeta }{\Phi _a}\Phi _a^TS_\zeta ^T. 
\end{equation}
As ${P_\zeta } \in {\mathbb{R}^{{N_\zeta } \times {N_\zeta }}}$ is a principal submatrix of $P_S$, it is assumed that ${P_\zeta } = {U_\zeta }{P_S}U_\zeta ^T$ where the constant matrix ${U_\zeta } = \left[ {\begin{array}{*{20}{c}}
		{{I_{{N_\zeta } \times {N_\zeta }}}}&{{0_{{N_\zeta } \times (N - {N_\zeta })}}} 
\end{array}} \right] \in {\mathbb{R}^{{N_\zeta } \times N}}$. To obtain $\dot P_\zeta ^{ - 1}$, consider the following equation  
\begin{equation}\label{eq47}
	{P_\zeta }P_\zeta ^{ - 1} = {I_{{N_\zeta } \times {N_\zeta }}}. 
\end{equation}
Take derivative of both sides of \eqref{eq48} and we have  
\begin{equation}\label{eq48}
	{{\dot P}_\zeta }P_\zeta ^{ - 1} + {P_\zeta }\dot P_\zeta ^{ - 1} = {0_{{N_\zeta } \times {N_\zeta }}}. 
\end{equation}
Therefore, $\dot P_\zeta ^{ - 1}$ can be calculated with  
\begin{equation}\label{eq49}
\begin{aligned}
	\dot P_\zeta ^{ - 1} &=  - P_\zeta ^{ - 1}{{\dot P}_\zeta }P_\zeta ^{ - 1} \hfill \\
	&=  - P_\zeta ^{ - 1}{U_\zeta }{{\dot P}_S}U_\zeta ^TP_\zeta ^{ - 1} \hfill \\ 
	&= P_\zeta ^{ - 1}{U_\zeta }{P_S}\dot P_S^{ - 1}{P_S}U_\zeta ^TP_\zeta ^{ - 1}. \\
\end{aligned} 
\end{equation}

Consider \eqref{eq46} and reformulate $\dot{P}^{-1}_S$ into the following block matrix  
\begin{equation}\label{eq50}
\begin{aligned}
	\dot P_S^{ - 1} &= \left[ {\begin{array}{*{20}{c}}
			{{B_1}}&{{B_2}} \\ 
			{{B_3}}&{{B_4}} 
	\end{array}} \right] \hfill \\
	&= \left[ {\begin{array}{*{20}{c}}
			{{B_1}}&0 \\ 
			0&0 
	\end{array}} \right] + \left[ {\begin{array}{*{20}{c}}
			0&{{B_2}} \\ 
			{{B_3}}&{{B_4}} 
	\end{array}} \right] \hfill \\
	&= \dot P_{S\zeta }^{ - 1} + \dot P_{S\bar \zeta }^{ - 1} \hfill \\ 
\end{aligned} 
\end{equation}
where ${B_1} = {\Phi _\zeta }({x_d})\Phi _\zeta ^T({x_d}) - {\Phi _{a\zeta }}\Phi _{a\zeta }^T \in {\mathbb{R}^{{N_\zeta } \times {N_\zeta }}}$ with ${\Phi _{a\zeta }}$ being the recorded regressor subvector corresponding to ${\Phi _\zeta }({x_d})$, $\dot P_{S\zeta }^{ - 1} = \left[ {\begin{array}{*{20}{c}}
		{{B_1}}&0 \\ 
		0&0 
\end{array}} \right]$, $\dot P_{S\bar \zeta }^{ - 1} = \left[ {\begin{array}{*{20}{c}}
		0&{{B_2}} \\ 
		{{B_3}}&{{B_4}} 
\end{array}} \right]$, and the entries of $B_2$, $B_3$, $B_4$ are all close to zero. 

Substitute \eqref{eq50} into \eqref{eq49} and we obtain 
\begin{equation}\label{eq51}
\begin{aligned}
	\dot P_\zeta ^{ - 1} &= P_\zeta ^{ - 1}{U_\zeta }{P_S}(\dot P_{S\zeta }^{ - 1} + \dot P_{S\bar \zeta }^{ - 1}){P_S}U_\zeta ^TP_\zeta ^{ - 1} \hfill \\
	&= {B_1} + P_\zeta ^{ - 1}{U_\zeta }{P_S}\dot P_{S\bar \zeta }^{ - 1}{P_S}U_\zeta ^TP_\zeta ^{ - 1} \hfill \\ 
\end{aligned} 
\end{equation}
According to the definition of $B_1$, we obtain  
\begin{equation}\label{eq52}
	\begin{small}
\begin{aligned}
	{\lambda _{\max }}(B_1) &= {\lambda _{\max }}\left( {\Phi _\zeta }({x_d})\Phi _\zeta ^T({x_d}) - {\Phi _{a\zeta }}\Phi _{a\zeta }^T \right) \hfill \\
	&= \mathop {\max }\limits_{X \in {\mathbb{R}^N},\left\| X \right\| \ne 0} \frac{{{X^T}\left( {\Phi _\zeta }({x_d})\Phi _\zeta ^T({x_d}) - {\Phi _{a\zeta }}\Phi _{a\zeta }^T \right)X}}{{{X^T}X}} \hfill \\
	&= {\lambda _{\max }}\left( {\Phi_\zeta ({x_d}){\Phi_\zeta ^T}({x_d})} \right) - {\lambda _{\max }}\left( {{\Phi _{a\zeta}}\Phi _{a\zeta}^T} \right) \hfill \\
	&= tr\left( {\Phi_\zeta ({x_d}){\Phi_\zeta ^T}({x_d})} \right) - tr\left( {{\Phi _{a\zeta}}\Phi _{a\zeta}^T} \right) \hfill \\
	&\leqslant N. \hfill \\ 
\end{aligned} 
\end{small}
\end{equation}
Since each entry of $\dot P_{S\bar \zeta }^{ - 1}$ is close to zero, it is shown that each eigenvalue of it is also close to zero according to Gershgorin circle theorem. Therefore, we can assume that there exists a constant $\gamma  > 0$ such that  
\begin{equation}\label{eq53}
{\lambda _{\max }}(\dot P_{S\bar \zeta }^{ - 1}) \leqslant \gamma
\end{equation}
always holds along the trajectory of \eqref{eq46}.

With Assumption \ref{assumption5}, we have  
\begin{equation}\label{eq54}
{\lambda _{\max }}({P_S}) = {\lambda _{\max }}(P) \leqslant {p_0}. 
\end{equation}
Since $P_\zeta$ is a principal submatrix of $P_S$, we obtain ${\lambda _{\min }}({P_\zeta }) \geqslant {\lambda _{\min }}({P_S}) \geqslant {q_0}$ according to Cauchy interlace theorem. Therefore, we have  
\begin{equation}\label{eq55}
\begin{aligned}
	{\lambda _{\max }}(P_\zeta ^{ - 1}) = \frac{1}{{{\lambda _{\min }}({P_\zeta })}} \leqslant \frac{1}{{{q_0}}}. \hfill \\ 
\end{aligned} 
\end{equation}
According to the definition of $U_\zeta$, the following inequation  
\begin{equation}\label{eq56}
\left\| {{X^T}P_\zeta ^{ - 1}{U_\zeta }} \right\| \leqslant \left\| {{X^T}P_\zeta ^{ - 1}} \right\|
\end{equation}
holds for any $X \in {\mathbb{R}^{{N_\zeta }}}$. 

Consider \eqref{eq51}, \eqref{eq52}, \eqref{eq53}, \eqref{eq54}, \eqref{eq55}, \eqref{eq56} and we obtain  
\begin{equation}\label{eq57}
	\begin{small}
\begin{aligned}
	{X^T}\dot P_\zeta ^{ - 1}X &= {X^T}P_\zeta ^{ - 1}{U_\zeta }{P_S}\dot P_S^{ - 1}{P_S}U_\zeta ^TP_\zeta ^{ - 1}X \hfill \\
	&= {X^T}{B_1}X + {X^T}P_\zeta ^{ - 1}{U_\zeta }{P_S}\dot P_{S\bar \zeta }^{ - 1}{P_S}U_\zeta ^TP_\zeta ^{ - 1}X \hfill \\
	&\leqslant {\lambda _{\max }}({B_1}){\left\| X \right\|^2} + {\lambda _{\max }}(\dot P_{S\bar \zeta }^{ - 1}){\left\| {{X^T}P_\zeta ^{ - 1}{U_\zeta }{P_S}} \right\|^2} \hfill \\
	&\leqslant N{\left\| X \right\|^2} + \gamma p_0^2{\left\| {{X^T}P_\zeta ^{ - 1}{U_\zeta }} \right\|^2} \hfill \\
	&\leqslant N{\left\| X \right\|^2} + \gamma p_0^2{\left\| {{X^T}P_\zeta ^{ - 1}} \right\|^2} \hfill \\
	&\leqslant N{\left\| X \right\|^2} + \gamma {\left( {\frac{{{p_0}}}{{{q_0}}}} \right)^2}{\left\| X \right\|^2} \hfill \\ 
\end{aligned} 
\end{small}
\end{equation}
holds for any $X \in {\mathbb{R}^{{N_\zeta }}}$. Let $\sigma  = N + \gamma {\left( {\frac{{{p_0}}}{{{q_0}}}} \right)^2}$ and this ends the proof. 
\end{proof}

\begin{lemma}\label{lemma4}
	Consider the nominal system of \eqref{eq44} as follows  
	\begin{equation}\label{eq58}
		\begin{aligned}
			\left[ {\begin{array}{*{20}{c}}
					{\dot e} \\ 
					{{{\dot {\tilde W}}_\zeta }} 
			\end{array}} \right] = \left[ {\begin{array}{*{20}{c}}
					{A}&\vline & { bg(x)\Phi _\zeta ^T({x_d})} \\ 
					\hline
					{ - \eta { P_\zeta }{\Phi _\zeta }({x_d}){b^T}}&\vline & {{0_{{N_\zeta } \times {N_\zeta }}}} 
			\end{array}} \right]\left[ {\begin{array}{*{20}{c}}
					e \\ 
					{{{\tilde W}_\zeta }} 
			\end{array}} \right] 
		\end{aligned}. 
	\end{equation}
	With Assumption \ref{assumption1}, the PE condition of ${{\Phi _\zeta }({x_d})}$ and sufficiently large control gains $\eta, {k_i},i = 1,2, \ldots ,n$, \eqref{eq58} is exponentially stable, which means that the errors $e$ and ${\tilde W}_\zeta$ will exponentially converge to zero. 
\end{lemma}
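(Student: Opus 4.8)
The plan is to follow the Lyapunov argument behind Lemma~\ref{lemma2} (Theorem~1 of \cite{r35}) almost verbatim, the only structural novelty being the time variation of the gain matrix $\eta P_\zeta$. First I would note that by Assumption~\ref{assumption5} and the Cauchy interlace theorem (cf.~\eqref{eq54}--\eqref{eq55}) $P_\zeta$ is uniformly bounded above and below, $q_0 \leq \lambda_{\min}(P_\zeta) \leq \lambda_{\max}(P_\zeta) \leq p_0$ for all $t$, so $\eta P_\zeta$ occupies exactly the slot of the constant matrix $\Gamma_\zeta$ in \eqref{eq32}. I would then take the composite Lyapunov function of that proof --- schematically $V = e^T P_e e + \eta^{-1}\tilde W_\zeta^T P_\zeta^{-1}\tilde W_\zeta$, with $P_e$ a (possibly state-dependent) Lyapunov matrix for the Hurwitz $e$-subsystem driven by $A$ and, as in the standard persistent-excitation argument, an added cross term built from $\int \Phi_\zeta\Phi_\zeta^T$ that converts persistency of excitation into a strict decrease.

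Differentiating $V$ along \eqref{eq58} and substituting $\dot{\tilde W}_\zeta = -\eta P_\zeta\Phi_\zeta(x_d)e_n$, the adaptation cross term becomes $\eta^{-1}\tilde W_\zeta^T P_\zeta^{-1}\dot{\tilde W}_\zeta = -\tilde W_\zeta^T\Phi_\zeta(x_d)e_n$, which cancels the $e$-side cross term $\propto e^T P_e b\,g(x)\Phi_\zeta^T(x_d)\tilde W_\zeta$ coming from $\dot e$, exactly as in Lemma~\ref{lemma2}. The one new term with no counterpart there is the gain drift $\eta^{-1}\tilde W_\zeta^T\dot P_\zeta^{-1}\tilde W_\zeta$, which by Lemma~\ref{lemma3} is at most $(\sigma/\eta)\|\tilde W_\zeta\|^2$; the $g(x)$-dependence of $P_e$ contributes a further term that is controlled through Assumption~\ref{assumption1}'s bound on $\dot g$, just as in the proof of Theorem~\ref{theorem1}, and any residual coupling is absorbed by Young's inequality against the negative-definite-in-$e$ term at the cost of enlarging the $k_i$.

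It then remains to verify that the decrease manufactured by persistent excitation still dominates the collected $\|\tilde W_\zeta\|^2$ remainders. Under Assumption~\ref{assumption3}, $\Phi_\zeta(x_d)$ is PE, so the nominal pair $(e,\tilde W_\zeta)$ with a \emph{frozen} gain is uniformly exponentially stable (the content of Lemma~\ref{lemma2}); the leftover $(\sigma/\eta)\|\tilde W_\zeta\|^2$ is a vanishing-at-the-origin perturbation of that UES system, and choosing $\eta$ sufficiently large makes $\sigma/\eta$ arbitrarily small while enlarging $k_1,\ldots,k_n$ keeps the $e$-subsystem dominant. Applying the robustness-of-exponential-stability estimates already used for Theorem~\ref{theorem1} (Lemmas~9.1--9.2 of \cite{r36}) then yields $\dot V \leq -cV$ for a constant $c>0$ independent of $t_0$, hence exponential convergence of $e$ and $\tilde W_\zeta$ to zero.

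The main obstacle is precisely the sign-indefiniteness of $\dot P_\zeta^{-1}$: unlike the constant-gain case, the adaptation part of $\dot V$ carries no free negative term to soak it up, so the argument must simultaneously lean on Lemma~\ref{lemma3}'s uniform bound, on the averaged negativity extracted from persistent excitation, and on the largeness of $\eta$ --- and every constant must be kept uniform in $t_0$ even though $A$, $g(x)$ and $P_\zeta$ are all time-varying and $P_\zeta$ only satisfies one-sided differential bounds.
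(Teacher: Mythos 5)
Your skeleton matches the paper's up to the point where persistent excitation has to be converted into a strict decrease in $\tilde W_\zeta$: the paper also rescales $z_n=e_n/g(x)$, uses the term $\frac{1}{2\eta}\tilde W_\zeta^TP_\zeta^{-1}\tilde W_\zeta$ so that the adaptation cross term cancels against the tracking-error cross term, handles $\dot g$ through the gain condition $k_n>g_{1u}/(2g_{0l}^2)$, and bounds the gain-drift term by $\frac{\sigma}{2\eta}\|\tilde W_\zeta\|^2$ via Lemma \ref{lemma3}, to be dominated later by taking $\eta$ large. Up to there your plan and the paper's $V_1$ computation coincide.

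The gap is in how you then extract negativity in the $\tilde W_\zeta$ directions. You invoke an unspecified ``cross term built from $\int\Phi_\zeta\Phi_\zeta^T$'' plus Khalil's Lemmas 9.1--9.2, but neither device is adapted to the actual difficulty of this lemma, namely that the adaptation gain $\eta P_\zeta$ is time-varying and only the one-sided differential bound of Lemma \ref{lemma3} is available: the classical PE cross-term constructions are built for constant adaptation gains, and you give no argument that such a term remains a valid Lyapunov augmentation when differentiated against $\dot{\tilde W}_\zeta=-\eta P_\zeta\Phi_\zeta e_n$ with $P_\zeta$ drifting; moreover Lemmas 9.1--9.2 address perturbations added to the vector field, not a sign-indefinite quadratic residual left in $\dot V$, and in this paper they are reserved for the genuinely perturbed systems of Theorems \ref{theorem1} and \ref{theorem2}, not for the nominal system here. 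The paper closes exactly this hole by a different mechanism: it freezes the gain at $P_\zeta(0)$, writes \eqref{eq60} as $\dot\Theta=\Xi\Theta+\Psi$ where $\Xi$ is the constant-gain PE system (exponentially stable by Lemma 1 of \cite{r35}), obtains a converse Lyapunov matrix $\Lambda$ with $\Lambda^T\Xi+\Xi^T\Lambda+\dot\Lambda=-D$ (Theorem 4.12 of \cite{r36}), and observes that the deviation term $\Psi$, although its coefficient $P_\zeta(0)-P_\zeta$ is not small, is proportional to $|z_n|$ and hence bounded by $\lambda_0\|\Theta\|\,\|z\|$; the weighted combination $V_3=\pi V_1+V_2$ with $\pi=2\lambda_0^2/(\lambda_{\min}(Q)\lambda_{\min}(D))$ lets the $-\|z\|^2$ term from $V_1$ absorb this coupling, after which the remaining $\frac{\pi\sigma}{2\eta}\|\tilde W_\zeta\|^2$ is dominated by choosing $\eta$ large, giving $\dot V_3\leqslant-\frac18\lambda_{\min}(D)\|\Theta\|^2$. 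If you want to keep your single-function route, you must actually construct the PE cross term and show it tolerates the time variation of $P_\zeta$; otherwise the frozen-gain decomposition is the step your proposal is missing.
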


\begin{proof}
Apply the linear transformation  
\begin{equation}\label{eq59}
	\left\{ \begin{gathered}
		{z_i} = {e_i},i = 1,2, \ldots ,n - 1 \hfill \\
		{z_n} = {e_n}/g(x) \hfill \\ 
	\end{gathered}  \right.
\end{equation}
to \eqref{eq55} and we obtain  
\begin{equation}\label{eq60}
\left[ {\begin{array}{*{20}{c}}
		{\dot z} \\ 
		{{{\dot {\tilde W}}_\zeta }} 
\end{array}} \right] = \left[ {\begin{array}{*{20}{c}}
		B&\vline & {b\Phi _\zeta ^T({x_d})} \\ 
		\hline
		{ - \eta g(x){P_\zeta }{\Phi _\zeta }({x_d}){b^T}}&\vline & {{0_{{N_\zeta } \times {N_\zeta }}}} 
\end{array}} \right]\left[ {\begin{array}{*{20}{c}}
		z \\ 
		{{{\tilde W}_\zeta }} 
\end{array}} \right]
\end{equation}
where 
\begin{equation}\label{eq61}
B = \left[ {\begin{array}{*{20}{c}}
		{ - {k_1}}&1&{}&{}&{} \\ 
		{ - 1}&{ - {k_2}}&1&{}&{} \\ 
		{}&{}& \ddots &{}&{} \\ 
		{}&{}&{ - 1}&{ - {k_{n - 1}}}&{g(x)} \\ 
		{}&{}&{}&{ - 1}&{ - {k_n}g(x) - \frac{{\dot g(x)}}{{g(x)}}} 
\end{array}} \right]. 
\end{equation}

Let $H = \left[ {\begin{array}{*{20}{c}}
		{{I_{(n - 1) \times (n - 1)}}}&{} \\ 
		{}&{g(x)} 
\end{array}} \right]$, ${B^T}H + {H^T}B + \dot H =  - Q$, and we have  
\begin{equation}\label{eq62}
Q = \left[ {\begin{array}{*{20}{c}}
		{2{k_1}}&{}&{}&{}&{} \\ 
		{}&{2{k_2}}&{}&{}&{} \\ 
		{}&{}& \ddots &{}&{} \\ 
		{}&{}&{}&{2{k_{n - 1}}}&{} \\ 
		{}&{}&{}&{}&{ 2{k_n}{g^2}(x) + \dot g(x)}
\end{array}} \right]. 
\end{equation}
Thus we can guarantee that $Q$ is positive definite by choosing  
\begin{equation}\label{eq63}
\left\{ \begin{gathered}
	{k_i} > 0,i = 1,2, \ldots ,n - 1 \hfill \\
	{k_n} >  \frac{{{g_{1u}}}}{{2g_{0l}^2}}. \hfill \\ 
\end{gathered} \right.
\end{equation}

Consider the following Lyapunov function candidate  
\begin{equation}\label{eq64}
{V_1} = \frac{1}{{2}}{z^T}Hz + \frac{1}{{2\eta }}{{\tilde W}^T}P_\zeta ^{ - 1}\tilde W. 
\end{equation}
The trajectory of $\dot{V}_1$ along \eqref{eq58} satisfies  
\begin{equation}\label{eq65}
\begin{aligned}
{{\dot V}_1} &=  - \frac{1}{2}{z^T}Qz + \frac{1}{{2\eta }}{{\tilde W}^T}\dot P_\zeta ^{ - 1}\tilde W \\
&\leqslant  - \frac{1}{2}{\lambda _{\min }}(Q){\left\| z \right\|^2} + \frac{\sigma }{{2\eta }}{\left\| {\tilde W} \right\|^2}
\end{aligned}
\end{equation}
where $\sigma>0$ is an estimate of the maximum eigenvalue of $\dot P_\zeta ^{ - 1}$ shown in Lemma \ref{lemma3}. 

Let $\Theta  = \left[ {\begin{array}{*{20}{c}}
		z \\ 
		{{{\tilde W}_\zeta }} 
\end{array}} \right]$ and the system \eqref{eq60} is decomposed into the following form 
\begin{equation}\label{eq66}
\dot \Theta  = \Xi \Theta  + \Psi 
\end{equation}
where 
\begin{equation}\label{eq67}
	\begin{small}
	\begin{aligned}
&\Xi  = \left[ {\begin{array}{*{20}{c}}
		B&\vline & {b\Phi _\zeta ^T({x_d})} \\ 
		\hline
		{ - \eta g(x){P_\zeta }(0){\Phi _\zeta }({x_d}){b^T}}&\vline & {{0_{{N_\zeta } \times {N_\zeta }}}} 
\end{array}} \right], \\
&\Psi  = \left[ {\begin{array}{*{20}{c}}
		{{0_n}} \\ 
		{\eta g(x)\left( {{P_\zeta }(0) - {P_\zeta }} \right){\Phi _\zeta }({x_d}){z_n}} 
\end{array}} \right]. 
\end{aligned}
\end{small}
\end{equation}
According to Lemma 1 in \cite{r35}, the nominal system $\dot \Theta  = \Xi \Theta $ is exponentially stable with the PE condition of ${{\Phi _\zeta }({x_d})}$. As analyzed in Theorem 4.12 of \cite{r36}, there exists a positive definite matrix $\Lambda$ such that ${\Lambda ^T}\Xi  + {\Xi ^T}\Lambda  + \dot \Lambda  =  - D$ where $D$ is a given positive definite matrix. Consider the following Lyapunov function candidate  
\begin{equation}\label{eq68}
{V_2} = \frac{1}{2}{\Theta ^T}\Lambda \Theta 
\end{equation}
Thus the derivative of $V_2$ along the trajectory of \eqref{eq66} satisfies  
\begin{equation}\label{eq69}
	\begin{small}
\begin{aligned}
	{{\dot V}_2} &= \frac{1}{2}\left( {{\Theta ^T}\Lambda \dot \Theta  + {{\dot \Theta }^T}\Lambda \Theta  + {\Theta ^T}\dot \Lambda \Theta } \right) \hfill \\
	&=  - \frac{1}{2}{\Theta ^T}D\Theta  + {\Theta ^T}\Lambda \Psi  \hfill \\
	&\leqslant  - \frac{1}{2}{\lambda _{\min }}(D){\left\| \Theta  \right\|^2} + {\lambda _{\max }}(\Lambda )\left\| \Theta  \right\|\left\| \Psi  \right\| \hfill \\ 
	& \leqslant  - \frac{1}{2}{\lambda _{\min }}(D){\left\| \Theta  \right\|^2} + {\lambda _0}\left\| \Theta  \right\|\left\| z \right\|
\end{aligned} 
\end{small}
\end{equation}
where $\lambda_0$ is an estimate of the upper bound of ${\lambda _{\max }}(\Lambda )\left\| {\eta g(x)\left( {{P_\zeta }(0) - {P_\zeta }} \right){\Phi _\zeta }({x_d})} \right\|$. Since $P_\zeta $ satisfies ${\lambda _{\min }}({P_\zeta }) \geqslant {q_0}$, ${\lambda _0} = \eta {g_{0u}}({p_0} - {q_0})N{\lambda _{\max }}(\Lambda )$ is an appropriate choice. 

Consider the following Lyapunov function candidate composed of $V_1$, $V_2$  
\begin{equation}\label{eq70}
{V_3} = \pi {V_1} + {V_2}. 
\end{equation}
The derivative of $V_3$ along \eqref{eq60} satisfies  
\begin{equation}\label{eq71}
	\begin{footnotesize}
	\begin{aligned}
{{\dot V}_3} \leqslant  - \frac{\pi }{2}{\lambda _{\min }}(Q){\left\| z \right\|^2} + \frac{{\pi \sigma }}{{2\eta }}{\left\| {\tilde W} \right\|^2} - \frac{1}{2}{\lambda _{\min }}(D){\left\| \Theta  \right\|^2} + {\lambda _0}\left\| \Theta  \right\|\left\| z \right\|. 
\end{aligned}
\end{footnotesize}	
\end{equation}
By choosing $\pi  = \frac{{2\lambda _0^2}}{{{\lambda _{\min }}(Q){\lambda _{\min }}(D)}}$, we obtain  
\begin{equation}\label{eq72}
{{\dot V}_3} \leqslant  - \frac{1}{4}{\lambda _{\min }}(D){\left\| \Theta  \right\|^2} + \frac{{\pi \sigma }}{{2\eta }}{\left\| {\tilde W} \right\|^2}
\end{equation}
Consider \eqref{eq62}, \eqref{eq72}, and we can increase the gains $\eta, {k_i},i = 1,2, \ldots ,n$ such that  
\begin{equation}\label{eq73}
 - \frac{1}{8}{\lambda _{\min }}(D) + \frac{{\pi \sigma }}{{2\eta }} \leqslant 0. 
\end{equation}
Substitute \eqref{eq73} into \eqref{eq72} and we obtain  
\begin{equation}\label{eq74}
{{\dot V}_3} \leqslant  - \frac{1}{8}{\lambda _{\min }}(D){\left\| \Theta  \right\|^2}. 
\end{equation}

Therefore, when the gains $\eta, {k_i},i = 1,2, \ldots ,n$ are designed to be sufficiently large such that \eqref{eq63} and \eqref{eq73} are satisfied, the system \eqref{eq58} is exponentially stable. 
\end{proof}

\begin{theorem}\label{theorem2}
Consider the system \eqref{eq44}. With Assumption \ref{assumption1}, the PE condition of ${{\Phi _\zeta }({x_d})}$ and properly designed control gains, $e$ and ${\tilde W}_\zeta$ will exponentially converge to small neighborhoods around zero, which can be arbitrarily contracted with the increase of $\eta, {k_i},i = 1,2, \ldots ,n$. 
\end{theorem}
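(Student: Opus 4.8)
The plan is to repeat, almost verbatim, the perturbation argument used for Theorem~\ref{theorem1}, now building on Lemma~\ref{lemma4} instead of Lemma~\ref{lemma2}. First I would collect the bounds on the forcing terms of \eqref{eq44}: by Lemma~\ref{lemma1} the function tracking error satisfies $\left|\epsilon_{p2}(x,x_d)\right| = \left|p(x,x_d)-p(x_d)\right| \leqslant K_p\left\|e\right\|$ on $\Omega_x$, while $\left|\epsilon_{p1}(x_d)\right|<\epsilon_{p1}^*$ and $\left|\epsilon_\zeta(x_d)\right|<\epsilon_\zeta^*$ from Property~\ref{property2}, and $\left\|\epsilon_W\right\|<\epsilon_W^*$ from \eqref{eq42}. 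Since $\epsilon_p = \epsilon_{p1}+\epsilon_{p2}+\epsilon_\zeta$, the perturbation vector $[\,bg(x)\epsilon_p(x,x_d);\,\epsilon_W\,]$ splits into a vanishing part $[\,bg(x)\epsilon_{p2}(x,x_d);\,0\,]$, of norm at most $g_{0u}K_p\left\|e\right\|$, and a uniformly bounded nonvanishing part of small magnitude $\delta \leqslant g_{0u}(\epsilon_{p1}^*+\epsilon_\zeta^*)+\epsilon_W^*$ determined only by the neural-network hyperparameters.

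Next I would reuse the Lyapunov function already constructed inside the proof of Lemma~\ref{lemma4}, namely the $V_3$ of \eqref{eq70}, which satisfies $\dot V_3 \leqslant -\tfrac{1}{8}\lambda_{\min}(D)\left\|\Theta\right\|^2$ along the nominal system \eqref{eq58} once \eqref{eq63} and \eqref{eq73} hold. Evaluating $\dot V_3$ along the full system \eqref{eq44} adds a term bounded by $c\left\|\Theta\right\|\!\left(g_{0u}K_p\left\|e\right\|+\delta\right)$, where $c$ depends on $\lambda_{\max}(\Lambda)$ and $H$. Choosing the gains large enough that the nominal decay rate $\tfrac{1}{8}\lambda_{\min}(D)$ dominates $c\,g_{0u}K_p$ absorbs the vanishing part, leaving $\dot V_3 \leqslant -\kappa\left\|\Theta\right\|^2 + c'\left\|\Theta\right\|\delta$ for some $\kappa>0$; the same comparison / ISS argument used in Theorem~\ref{theorem1} (Lemmas~9.1 and~9.2 of \cite{r36}) then shows that $\Theta=(z,\tilde W_\zeta)$, and hence $(e,\tilde W_\zeta)$, converges exponentially to a ball whose radius is proportional to $\delta/\kappa$.

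Finally, for the ``arbitrarily contracted'' conclusion the key point is that $\delta$ is fixed, whereas increasing $\eta$ and $k_1,\dots,k_n$ enlarges $\lambda_{\min}(Q)$ in \eqref{eq62} and, through \eqref{eq73}--\eqref{eq74}, the effective decay rate $\kappa$, so the ultimate bound $\propto \delta/\kappa$ can be made as small as desired. The step I expect to be the main obstacle is the bookkeeping of how the converse-Lyapunov matrix $\Lambda$ of Lemma~\ref{lemma4} (and with it $D$, $\sigma$, $\lambda_0$, $\pi$) depends on the gains: because that lemma establishes exponential stability abstractly via Theorem~4.12 of \cite{r36}, one must \emph{argue} rather than compute that $\kappa$ genuinely grows with the gains and outruns the $K_p$-dependent vanishing term. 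The cleanest way to sidestep this is to invoke the perturbation lemmas of \cite{r36} wholesale, exactly as in the proof of Theorem~\ref{theorem1}, instead of re-deriving the estimates by hand; one should also retain the usual a~priori assumption that the closed-loop state stays in the compact set $\Omega_x$ on which Lemma~\ref{lemma1} holds.
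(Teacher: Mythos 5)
Your proposal is correct and follows essentially the same route as the paper: the same decomposition of the perturbation in \eqref{eq44} into the vanishing term $\left[bg(x)\epsilon_{p2}(x,x_d);\,0\right]$ and the nonvanishing term $\left[bg(x)(\epsilon_{p1}+\epsilon_\zeta);\,\epsilon_W\right]$, exponential stability of the nominal system from Lemma~\ref{lemma4}, and Lemmas~9.1 and~9.2 of \cite{r36} to conclude convergence to neighborhoods that shrink as $\eta,k_1,\dots,k_n$ grow. The extra Lyapunov bookkeeping you sketch (and then sensibly bypass by invoking the perturbation lemmas wholesale) is exactly what the paper also leaves to \cite{r36}.
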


\begin{proof}
Similar to the proof of Theorem 1, the perturbed term $\left[ {\begin{array}{*{20}{c}}
		{ bg(x){\epsilon _p}(x,{x_d})} \\ 
		{\epsilon_W} 
\end{array}} \right]$ is decomposed into a nonvanishing term $\left[ {\begin{array}{*{20}{c}}
{bg(x)({\epsilon _{p1}}({x_d}) + {\epsilon _\zeta }({x_d}))} \\ 
{{\epsilon _W}} 
\end{array}} \right]$ and a vanishing term $\left[ {\begin{array}{*{20}{c}}
{bg(x){\epsilon _{p2}}(x,{x_d})} \\ 
{{0_{{N_\zeta } \times 1}}} 
\end{array}} \right]$. 

According to Lemma \ref{lemma4}, the nominal part of \eqref{eq44} is exponentially stable with sufficiently large gains $\eta, {k_i},i = 1,2, \ldots ,n$. Thus Lemma 9.1 and 9.2 in \cite{r36} indicate that the system \eqref{eq44} is semiglobally exponentially stable with properly designed control gains. Specifically, $e$ and ${\tilde W}_\zeta$ will exponentially converge to small neighborhoods around zero, which can be arbitrarily contracted with the increase of $\eta, {k_i},i = 1,2, \ldots ,n$. 
\end{proof}

Assume that the RTPL based learning control happens on the time interval $\left[ {0,T} \right]$. Since the passive knowledge forgetting phenomenon is suppressed, the RBFNN will not pay excessive attention to the latest data, and the integral knowledge extraction method \eqref{eq33} is abandoned. The weight vector $\hat{W}(T)$ is directly regarded as the learned knowledge. In this case, the unknown function $p(x_d)$ is approximated by  
\begin{equation}\label{eq75}
\hat {p}({x_d}) = \hat{W}^T(T)\Phi ({x_d}) 
\end{equation}
which can be used to design the feedforward controller in subsequent control tasks. 

\begin{remark}\label{remark6}
Consider the RTPL based control law \eqref{eq26} and weight update law \eqref{eq42}, an equivalent adaptive gain is defined as follows  
\begin{equation}\label{eq76}
{K_E}({x_d}) \triangleq   \eta \Phi _\zeta ^T({x_d}){P_\zeta }{\Phi _\zeta }({x_d})
\end{equation}
which reflects the sensitivity of the output of the RBFNN to the error signal $e_n$. With the initialization in Remark \ref{remark3}, ${K_E}({x_d})$ is rather large at the beginning of the learning control process. To avoid the drastic update at the beginning, a saturation-like gain $\eta(t)$ is proposed  
\begin{equation}\label{eq77}
\eta(t) = \left\{ \begin{aligned}
	&\frac{\eta_0 t}{{{T_0}}},&\text{ }&for \text{ }0 \leqslant t \leqslant {T_0} \hfill \\
	&\text{ }\eta_0,&\text{ }&for \text{ }t > {T_0} \hfill \\ 
\end{aligned}  \right.
\end{equation}
where $T_0>0$ is a constant. At the transient phase of the learning control process, the desired output of the neural network estimated with tracking error is not accurate enough. Therefore, another reason for employing such a time-varying gain is to avoid recording data which has significant deviation from the desired output into the memory of partitions. It should be noted that there are many alternative designs of $\eta(t)$ besides \eqref{eq72} that may achieve good performance. 
\end{remark}

\subsection{Comparison with the SGD Based Learning Control}\label{section5-3}
The objective of RTPL is to improve control performance and learn unknown dynamics of the system simultaneously. In this section, the merits and defects of RTPL are analyzed in detail through comparison with the SGD based RBFNN learning control. 

\noindent \textbf{Merits of RTPL: }

1) {The learning speed is improved without using filters such that the learning can be achieved in both repetitive and nonrepetitive control tasks. Having overcome passive knowledge forgetting, the neural network will not forget the learned knowledge even in a nonrepetitive control task. }

2) {The robustness to hyperparameter setting of the neural network is improved. The SGD based NNLC is more sensitive to hyperparameter setting of the RBFNN because large learning rate and receptive field width will aggravate passive knowledge forgetting while small ones will degrade the learning performance \cite{r25}. }

3) {The generalization ability of the learned knowledge is improved as a result of the memory mechanism which allocates the synthesized samples approximately uniformly to the normalized feature space \cite{r25}. }

4) {The guaranteed learning performance under parameter perturbation. With SMRLS, the data that does not match the current dynamics will be efficiently replaced by the data representing the the current dynamics. }

\begin{remark}\label{remark7}
The most fundamental advantage of RTPL compared to other learning control algorithms lies in its capability to progressively accumulate knowledge of unknown dynamics during the real-time learning process. This characteristic implies that as the reference trajectory $x_d$ traverses more regions in the input space of the neural network, the learned knowledge becomes applicable to more control tasks. 
\end{remark}

\noindent \textbf{Defects of RTPL: }
The main defect of RTPL is the increased computational complexity compared with the SGD based learning control. As analyzed in \cite{r25}, dimensionality reduction methods such as principal component analysis can be used to reduce the computational complexity when the input dimension $n$ is high. 

\section{Simulation Studies}\label{section6}
In this section, simulation studies are carried out on the same single inverted pendulum-cart system as in \cite{r25} to demonstrate the effectiveness of RTPL. The state space description of the system is formulated as follows  
\begin{equation}\label{eq78}
\left\{ \begin{aligned}
	{{\dot x}_1} &= {x_2} \hfill \\
	{{\dot x}_2} &= {f_I}(x) + {g_I}(x)u \hfill \\
	y \text{ }&= {x} \hfill \\ 
\end{aligned}  \right.
\end{equation}
where 
\begin{equation}\label{eq79}
{f_I}(x) = \frac{{g\sin {x_1} - mlx_2^2\cos {x_1}\sin {x_1}/({m_c} + m)}}{{l(4/3 - m{{\cos }^2}{x_1}/({m_c} + m))}}, 
\end{equation}
\begin{equation}\label{eq80}
{g_I}(x) = \frac{{\cos {x_1}/({m_c} + m)}}{{l(4/3 - m{{\cos }^2}{x_1}/({m_c} + m))}} 
\end{equation}
and the state variables $x_1$ and $x_2$ represent the angle and angular velocity of the pendulum, respectively. The other paramters of the system are set as: the weight of the cart $m_c=0.1kg$, the weight of the pendulum $m=0.02kg$, half the length of the pendulum $l=0.2m$, the gravitational acceleration $g=9.8m/s^2$. In addition, the initial value of $x$ is set to $x_0=\left[\pi/60, 0\right]^T$. 

Since RTPL is a filter-free NNLC method, the filter-based methods will not be considered in this part. The tracking and learning performance of the stochastic gradient descent learning (SGDL) control and RTPL control in the form of hybrid feedforward feedback control are compared. In all of the simulation studies, the control law \eqref{eq26} is adopted whose gains are set to $k_1=2$, $k_2=5$, and the weight update laws \eqref{eq27}, \eqref{eq37} are adopted with the sampling period $\Delta T = 0.005s$. To demonstrate the effect of the RBFNN feedforward controller, a PD feedback controller is also considered by removing the network term from \eqref{eq26}. 

The RBFNNs employed in SGDL and RTPL share the same lattice distribution of neurons as shown in Fig. \ref{fig3}, where there are $5 \times 5$ neurons evenly distributed over the input space $\left[ { - 1,1} \right] \times \left[ { - 1,1} \right]$. For RTPL, the input space of the RBFNN is evenly discritized into $100 \times 100$ partitions based on SMRLS. In addition to aforementioned settings, there are three groups of parameters shown in TABLE \ref{table1} used for contrastive analysis. 

\begin{figure}[htbp]
	\centering
	\includegraphics[scale=0.36]{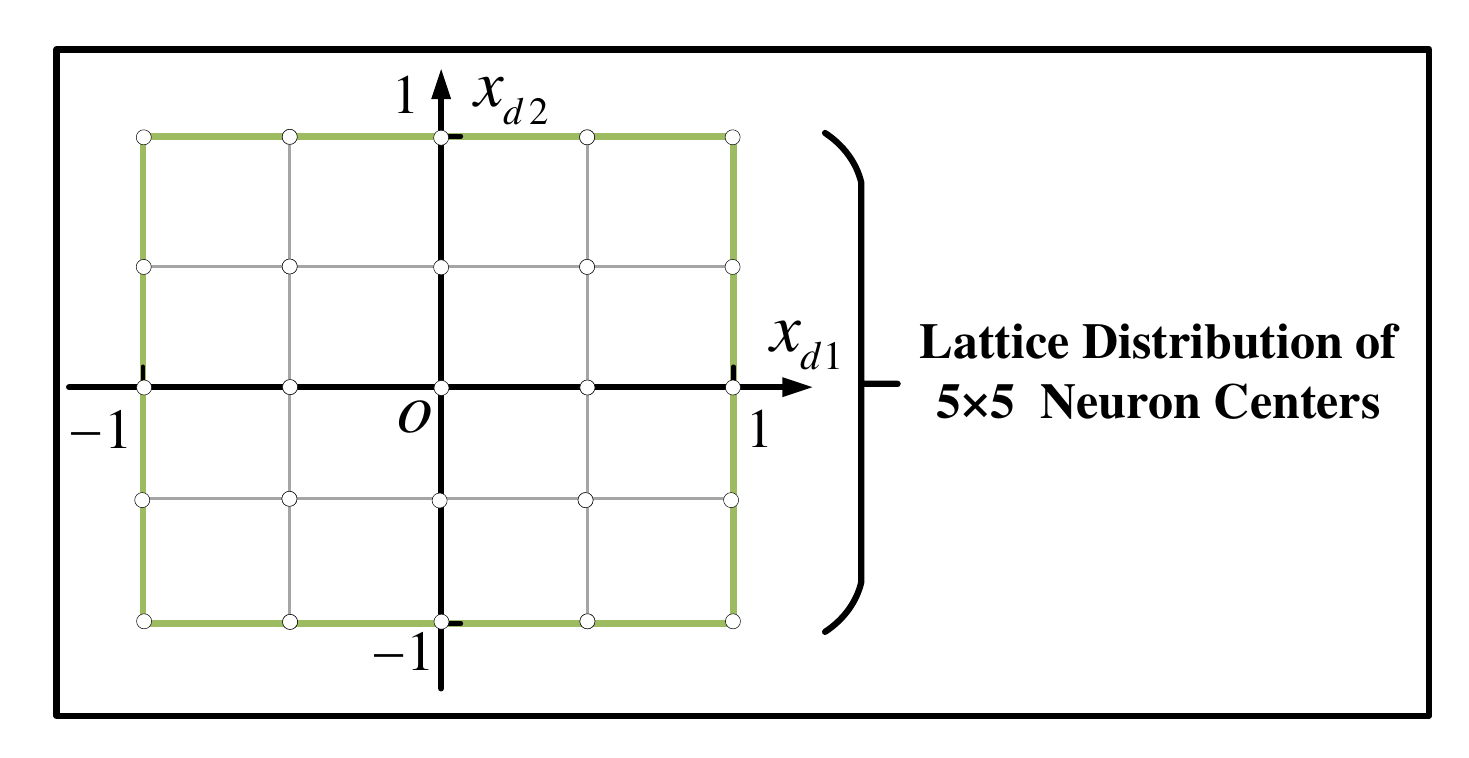}
	\caption{Lattice distribution of the RBFNN neuron centers}\vspace{-1em}\label{fig3}
\end{figure}

\begin{table}
	\centering
	\begin{tabular}{|c|c|c|c|c|}
		\hline
		\multicolumn{2}{|c|}{}&$a)$&$b)$&$c)$ \\
		\hline
		\multirow{4}*{RTPL}&$\sigma_i$&$0.3$&$2$&$0.5$\\
		\cline{2-5}
		&$\eta_0$&$5$&$5$&$5$ \\
		\cline{2-5}
		&$T_0$&$2$&$2$&$2$ \\
		\cline{2-5}
		&$p_0$&$100$&$100$&$100$ \\
		\hline
		\multirow{2}*{SGDL}&$\sigma_i$&$0.3$&$2$&$0.5$\\
		\cline{2-5}
		&$\Gamma$&\text{ }\text{ }$0.1I$\text{ }\text{ }&$0.005I$&\text{ }$0.05I$\text{ }\\
		\hline
	\end{tabular}
		\captionsetup{font={small}}
	\caption{Hyperparameter Settings in Different Tasks}
		\label{table1}
\end{table}

Section \ref{section6-1} and Section \ref{section6-2} compare RTPL and SGDL with different receptive field widths in a repetitive and a nonrepetitive task, respectively. Section \ref{section6-3} demonstrates the performance of RTPL in a repetitive control task in which the parameters of the system are changed abruptly in the learning control process. In Section \ref{section6-4}, the generalization ability of the learned knowledge through RTPL is verified by reusing the knowledge in a different task. Furthermore, the relationship between the training duration on random trajectories and the performance of the acquired weights on test trajectories is analyzed to demonstrate the ability of RTPL in progressively accumulating knowledge. 

\subsection{Learning Control in Repetitive Tasks}\label{section6-1}
Consider the following sinusoidal reference trajectory $\varphi_A$  
\begin{equation}\label{eq81}
\left\{ \begin{gathered}
	{{x}_{d1}} = \sin t \hfill \\
	{{x}_{d2}} = \cos t.  \hfill \\ 
\end{gathered}  \right.
\end{equation}
To demonstrate that the passive knowledge forgetting phenomenon is influenced by the receptive field width $\sigma_i$ as mentioned in Remark \ref{remark4}, the different hyperparameter settings $a)$, $b)$, $c)$ in TABLE \ref{table1} are adopted for SGDL and RTPL. 

Fig. \ref{fig4}, Fig. \ref{fig5} and Fig. \ref{fig6} illustrate the the weight convergence and control performance in the repetitive tracking control task with different hyperparameters a), b), c), respectively. When the receptive field width $\sigma_i$ is small enough and suitbale for SGDL, e.g. $\sigma_i=0.3$ in Fig. \ref{fig4}, both RTPL and SGDL can achieve the weight convergence but the convergence speed of RTPL is obviously higher than SGDL indicating that passive knowledge forgetting still exists in this case. According to Remark \ref{remark4}, the influence of passive knowledge forgetting will increase with the receptive field width. Therefore, in Fig. \ref{fig5} and Fig. \ref{fig6}, the weights of SGDL hardly converge while RTPL still achieves fast weight convergence. Although SGDL also improves the tracking accuracy of the original PD controller, the reason of its effectiveness lies in the high adaptive gain rather than accurate weight convergence. However, the high adaptive gain will decrease the robustness of the system and that is the reason why the weight convergence is emphasized. 

\begin{figure*}[htbp]
	\makebox[\textwidth][c]{\includegraphics[height=0.2\textwidth,width=1.05\textwidth]{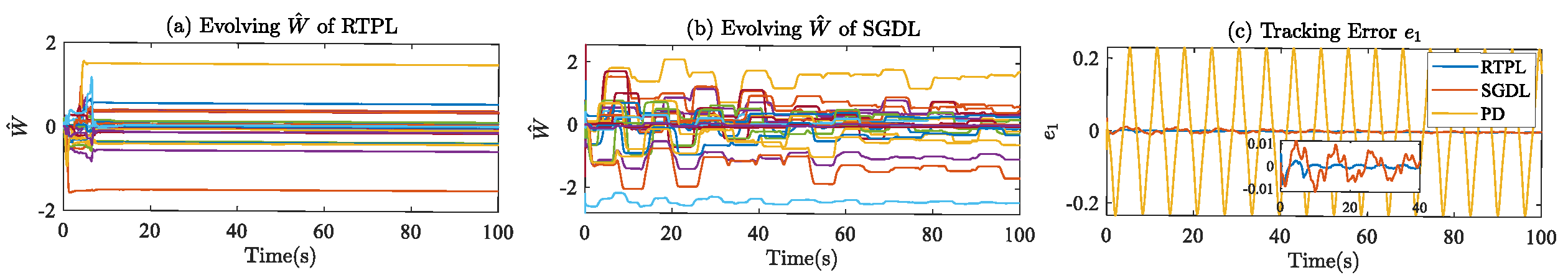}}
	\caption{Learning phase performance along $\varphi_A$ with $\sigma_i=0.3$ }
	\label{fig4}\vspace{-1em}
\end{figure*}

\begin{figure*}[htbp]
	\makebox[\textwidth][c]{\includegraphics[height=0.2\textwidth,width=1.06\textwidth]{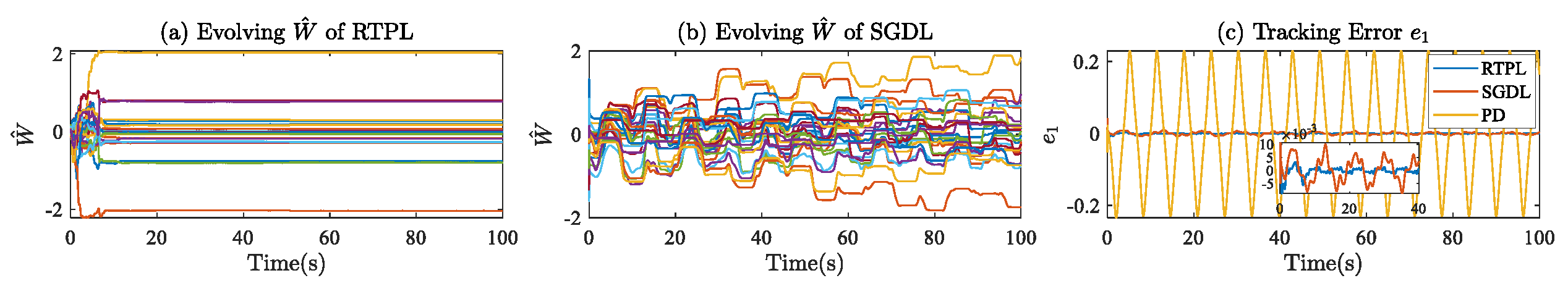}}
	\caption{Learning phase performance along $\varphi_A$ with $\sigma_i=0.5$}
	\label{fig5}\vspace{-1em}
\end{figure*}

\begin{figure*}[htbp]
	\makebox[\textwidth][c]{\includegraphics[height=0.2\textwidth,width=1.06\textwidth]{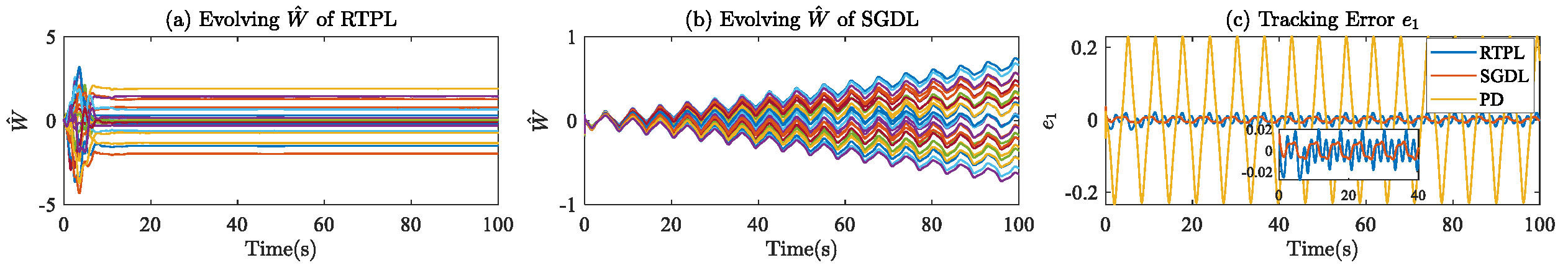}}
	\caption{Learning phase performance along $\varphi_A$ with $\sigma_i=2$}
	\label{fig6}\vspace{-1em}
\end{figure*}

After the learning phase, the learned knowledge of RTPL and SGDL is recorded using \eqref{eq75} and \eqref{eq34} respectively, where the integral interval of \eqref{eq33} is set to the last $5s$ of the learning process. The learned knowledge is reused to the same control task and the performance of the knowledge is shown in Fig. \ref{fig7}. Obviously, the learning speed of RTPL is much higher than SGDL while SGDL with relatively large receptive field width cannot even achieve basic approximation of the unknown function $p(x_d)$ within $100s$. 

\begin{figure*}[htbp]
	\makebox[\textwidth][c]{\includegraphics[height=0.3\textwidth,width=1.03\textwidth]{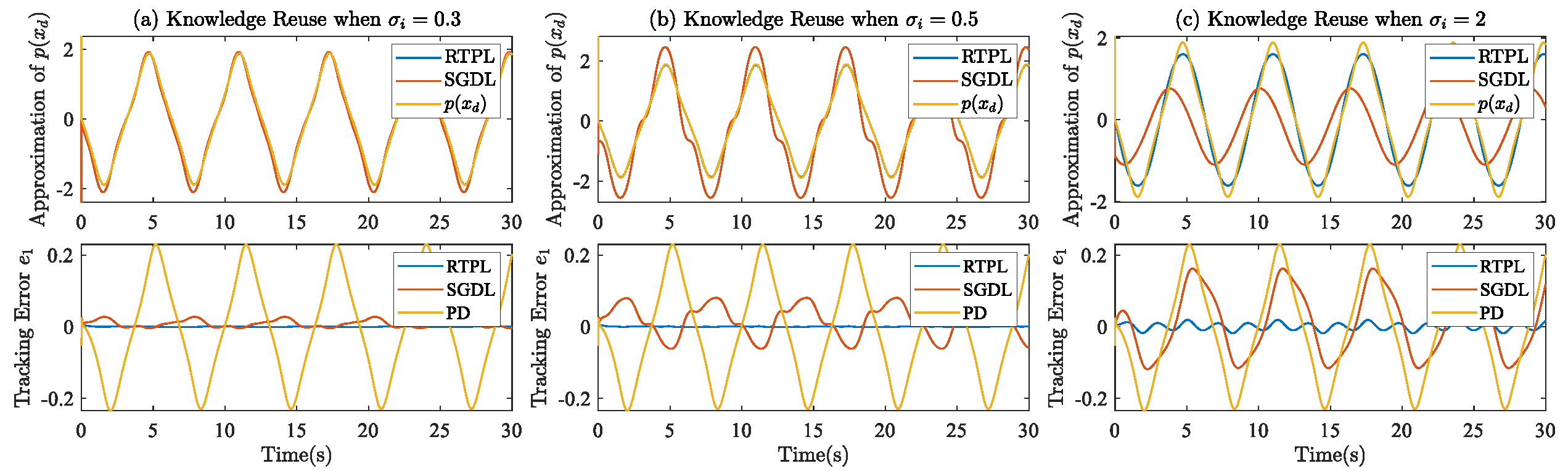}}
	\caption{Approximation and tracking performance of the learned knowledge along $\varphi_A$ with different $\sigma_i$}
	\label{fig7}\vspace{-1em}
\end{figure*}

The results indicate that RTPL achieves higher learning speed compared with SGDL in repetitive tasks. In addition, it is shown that the passive knowledge forgetting phenomenon of SGDL becomes severe with the increase of the receptive field width of the RBFNN. 

\subsection{Learning Control in Nonrepetitive Tasks}\label{section6-2}
To compare the performance of RTPL and SGDL in nonrepetitive tasks, a random non-uniform rational B-splines (NURBS) trajectory $\varphi_B$ with ${x_{d1}} \in \left[ { - 1,1} \right]$ is selected as the reference trajectory. The angular velocity $x_{d2}$ is also normalized into $\left[ { - 1,1} \right]$ for the partitioning operation of RTPL. Fig. \ref{fig8} illustrates the reference trajectory $\varphi_B$. The hyperparameter setting $a)$ in TABLE \ref{table1} is adopted in this task. 
\begin{figure}[htbp]
	\centering
	\includegraphics[scale=0.6]{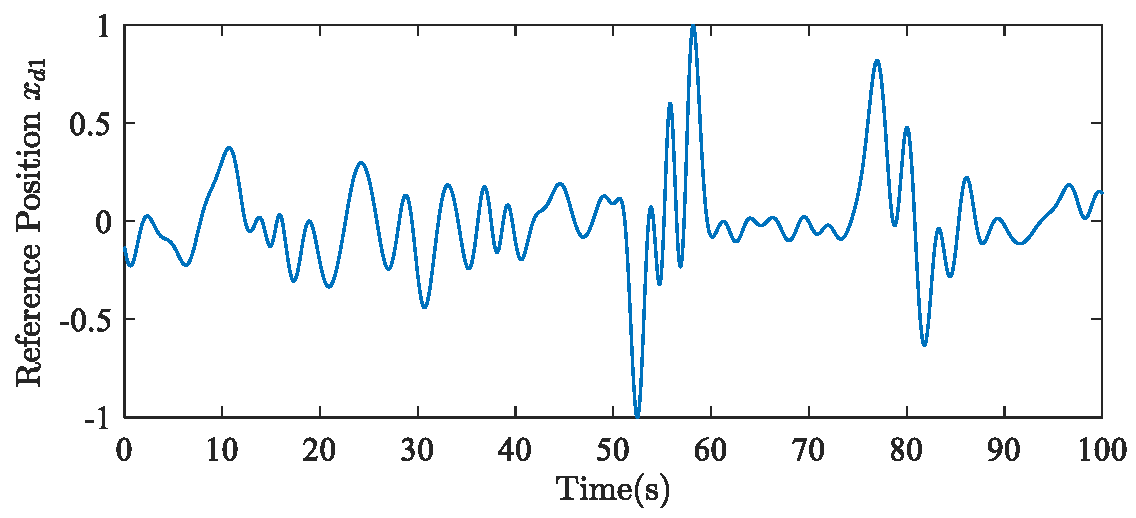}
	\caption{Random NURBS trajectory $\varphi_B$}\vspace{-1em}\label{fig8}
\end{figure}

Fig. \ref{fig9} and Fig. \ref{fig10} illustrate the learning phase and reuse phase performance of RTPL and SGDL, respectively, with the same knowledge extraction mentioned in Section \ref{section6-1}. The results demonstrate that RTPL achieves high learning speed while the SGDL method has not learned anything about the system. These results indicate that the learning performance of SGDL degrades significantly in nonrepetitive tasks as a result of the passive knowledge forgetting phenomenon while RTPL still has good learning performance. 

\begin{figure*}[htbp]
	\makebox[\textwidth][c]{\includegraphics[height=0.2\textwidth,width=1.05\textwidth]{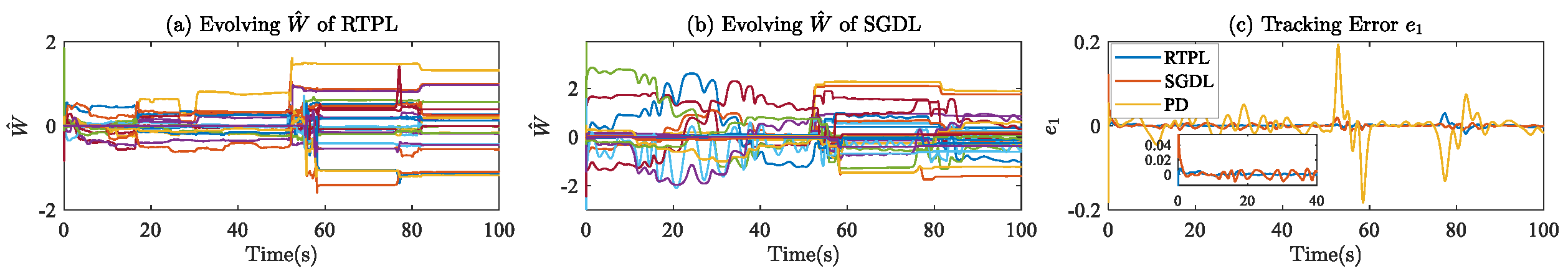}}
	\caption{Learning phase performance along $\varphi_B$}
	\label{fig9}\vspace{-1em}
\end{figure*}

\begin{figure*}[htbp]
	\makebox[\textwidth][c]{\includegraphics[height=0.2\textwidth,width=1.03\textwidth]{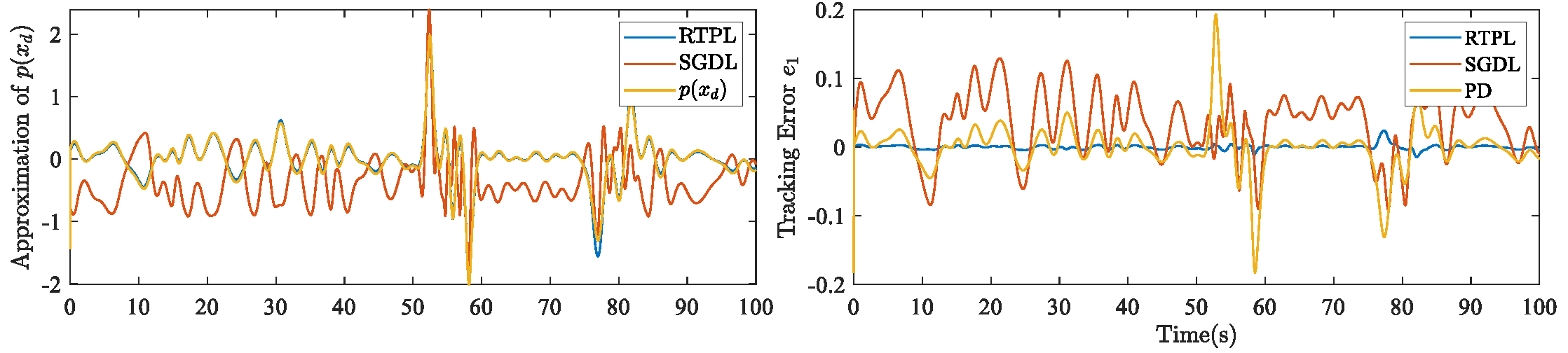}}
	\caption{Tracking and approximation performance of the learned knowledge along $\varphi_B$}
	\label{fig10}\vspace{-1em}
\end{figure*}

\subsection{Learning Control with Abrupt Parameter Perturbation}\label{section6-3}
To validate the learning and control performance of RTPL and SGDL under abrupt parameter perturbation, the same learning control task as in Section \ref{section6-1} is considered. It should be noted that the following parameter perturbation happens in the system \eqref{eq78}  
\begin{equation}\label{eq82}
l = \left\{ \begin{gathered}
	0.2,\quad for\quad 0 \leqslant t < 50s \hfill \\
	0.8,\quad for\quad 50s \leqslant t \leqslant 100s. \hfill \\ 
\end{gathered}  \right.
\end{equation}
The hyperparameter setting $a)$ in TABLE \ref{table1} is adopted. After the learning process, the same knowledge extraction mentioned in Section \ref{section6-1} is adopted. Fig. \ref{fig11} and Fig. \ref{fig12} show the learning and reuse phase performance respectively. While both RTPL and SGDL demonstrate resistance to abrupt parameter perturbation, RTPL maintains higher learning accuracy and speed than SGDL. Interestingly, theoretical analysis suggests that SMRLS can achieve re-approximation of an unknown function along a repetitive trajectory within one period after parameter perturbation \cite{r25} and the results in Fig. \ref{fig11} (c) indicate that RTPL also possesses this advantage. 

\begin{figure*}[htbp]
	\makebox[\textwidth][c]{\includegraphics[height=0.2\textwidth,width=1.05\textwidth]{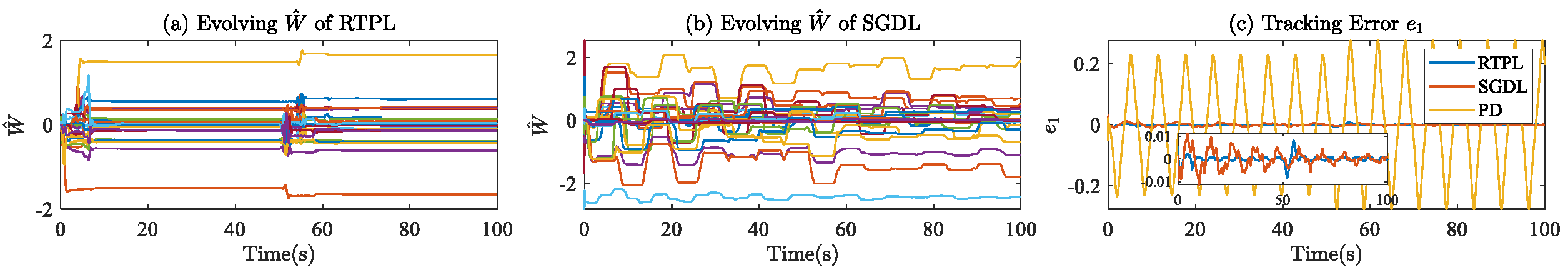}}
	\caption{Learning phase performance along $\varphi_A$ with abrupt parameter perturbation at $50s$}
	\label{fig11}\vspace{-1em}
\end{figure*}

\begin{figure*}[htbp]
	\makebox[\textwidth][c]{\includegraphics[height=0.2\textwidth,width=1.05\textwidth]{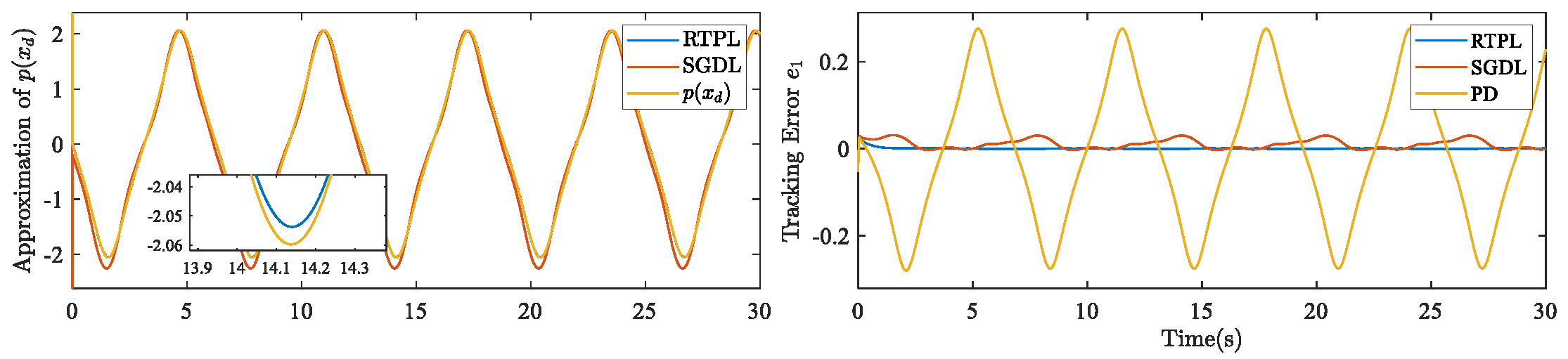}}
	\caption{Reuse phase performance along $\varphi_A$ with $l=0.8$}
	\label{fig12}\vspace{-1em}
\end{figure*}

\subsection{Generalization Capability and Knowledge Accumulation}\label{section6-4}
To validate the generalization capability of RTPL, a random NURBS reference trajectory ${\varphi _C}$ with ${x_{d1}} \in \left[ { - 1,1} \right]$ is adopted in the learning phase and shown in Fig. \ref{fig13}. The angular velocity $x_{d2}$ is also normalized into $\left[ { - 1,1} \right]$ for RTPL. The hyperparameter setting $a)$ in TABLE \ref{table1} is adopted. After the learning phase, the same knowledge extraction method mentioned in Section \ref{section6-1} is adopted and the recorded weights are used to design feedforward controllers for the tracking control task along another trajectory $\varphi_D$ as follows  
\begin{equation}\label{eq83}
{x_{d1}}(t) = \frac{{(20 + t)\sin t}}{{120}},t \in \left[ {0,100s} \right]. 
\end{equation}

\begin{figure}[htbp]
	\centering
	\includegraphics[scale=0.6]{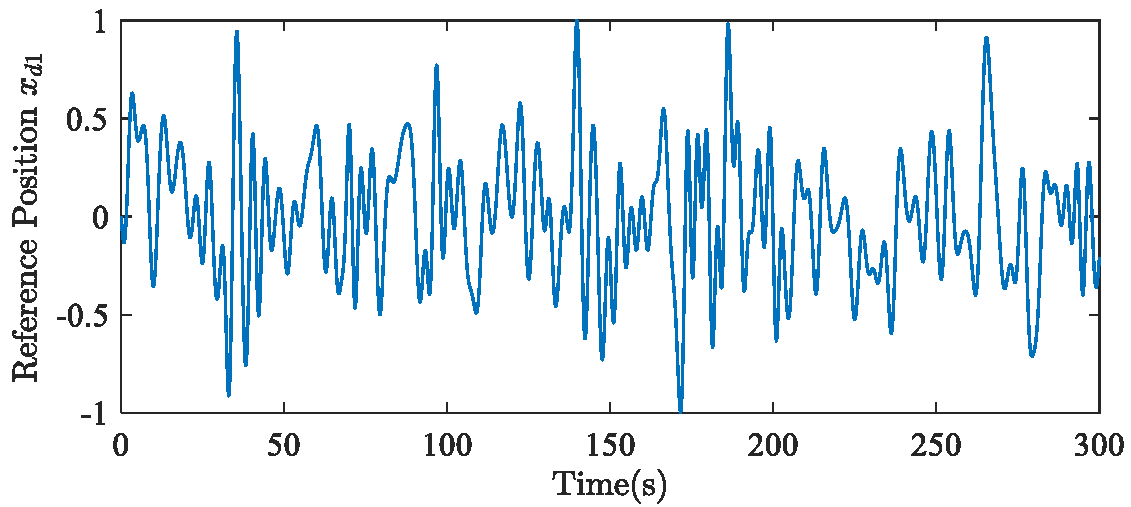}
	\caption{Random NURBS trajectory $\varphi_C$}\vspace{-1em}\label{fig13}
\end{figure}

\begin{figure*}[htbp]
	\makebox[\textwidth][c]{\includegraphics[height=0.2\textwidth,width=1.05\textwidth]{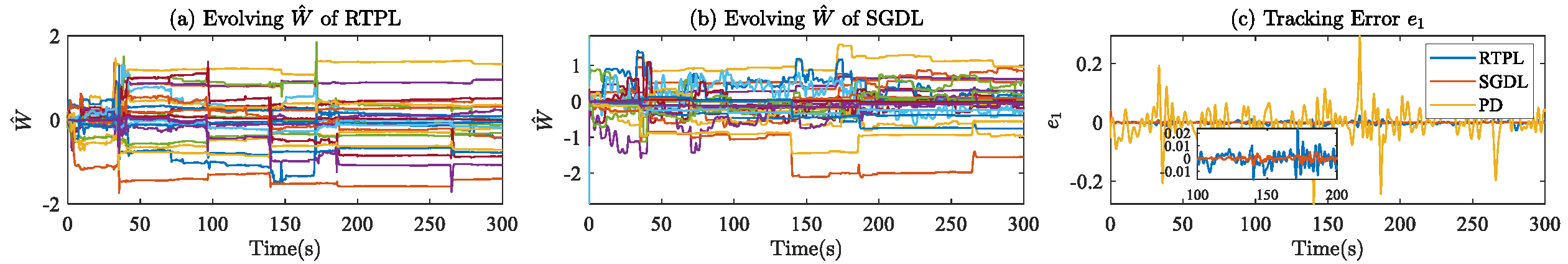}}
	\caption{Learning phase performance along $\varphi_C$}
	\label{fig14}\vspace{-1em}
\end{figure*}

Fig. \ref{fig15} shows reuse phase performance of RTPL and SGDL along $\varphi_D$. After the training of $300s$, the knowledge acquired by RTPL and SGDL has effectively enhanced the tracking accuracy of the control task along $\varphi_D$ while the knowledge acquired by RTPL exhibits stronger generalization capability than SGDL. 

\begin{figure*}[htbp]
	\makebox[\textwidth][c]{\includegraphics[height=0.2\textwidth,width=1.05\textwidth]{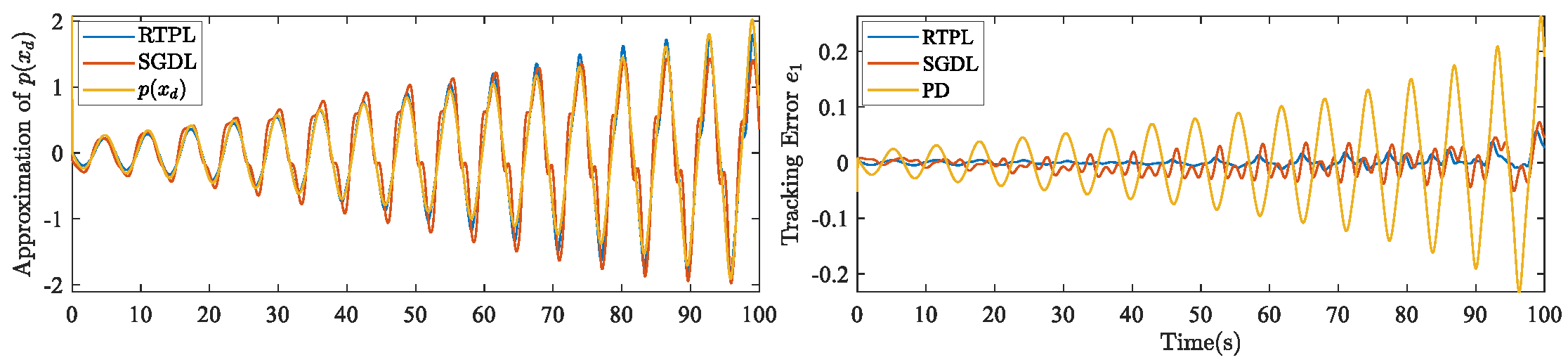}}
	\caption{The performance of the knowledge obtained from $\varphi_C$ in the tracking control task along $\varphi_D$}
	\label{fig15}\vspace{-1em}
\end{figure*}

In order to investigate the relationship between the accuracy of the learned knowledge and the learning duration, the weight vector is recorded every $30s$ during the learning process along $\varphi_C$. Then the ten groups of the weight vector are used for the tracking control task along the test trajectory $\varphi_D$. To measure the accuracy of the learned knowledge, the following integrated squared error (ISE) of $e_1$ and $\tilde{p}(x_d)=p(x_d)-\hat{p}(x_d)$ along $\varphi_D$ is considered  

\begin{equation}\label{eq84}
	\begin{small}
\left\{ \begin{aligned}
	&E({e_1}) = \int_0^{{T_D}} {e_1^2d\tau }  \hfill \\
	&E(\tilde p({x_d})) = \int_0^{{T_D}} {{{\tilde p}^2}({x_d})d\tau }  \hfill \\ 
\end{aligned}  \right.
\end{small}
\end{equation}
where $T_D=100s$ is the duration of $\varphi_D$. The relationship between the ISE and the learning duration is shown in Fig. \ref{fig16}. In general, the accuracy of the learned knowledge is improved as the training time increases which demonstrates the progressive characteristic of the RTPL method. Compared with SGDL, which exhibits fluctuations in learning accuracy with increasing learning duration, RTPL demonstrates progressively improving accuracy on along $\varphi_D$ as the learning time increases. Therefore, the results indicate that RTPL is capable of accumulating knowledge progressively in long-term real-time learning tasks. 

\begin{figure}
	\centering
	\includegraphics[height=0.32\textwidth,width=0.43\textwidth]{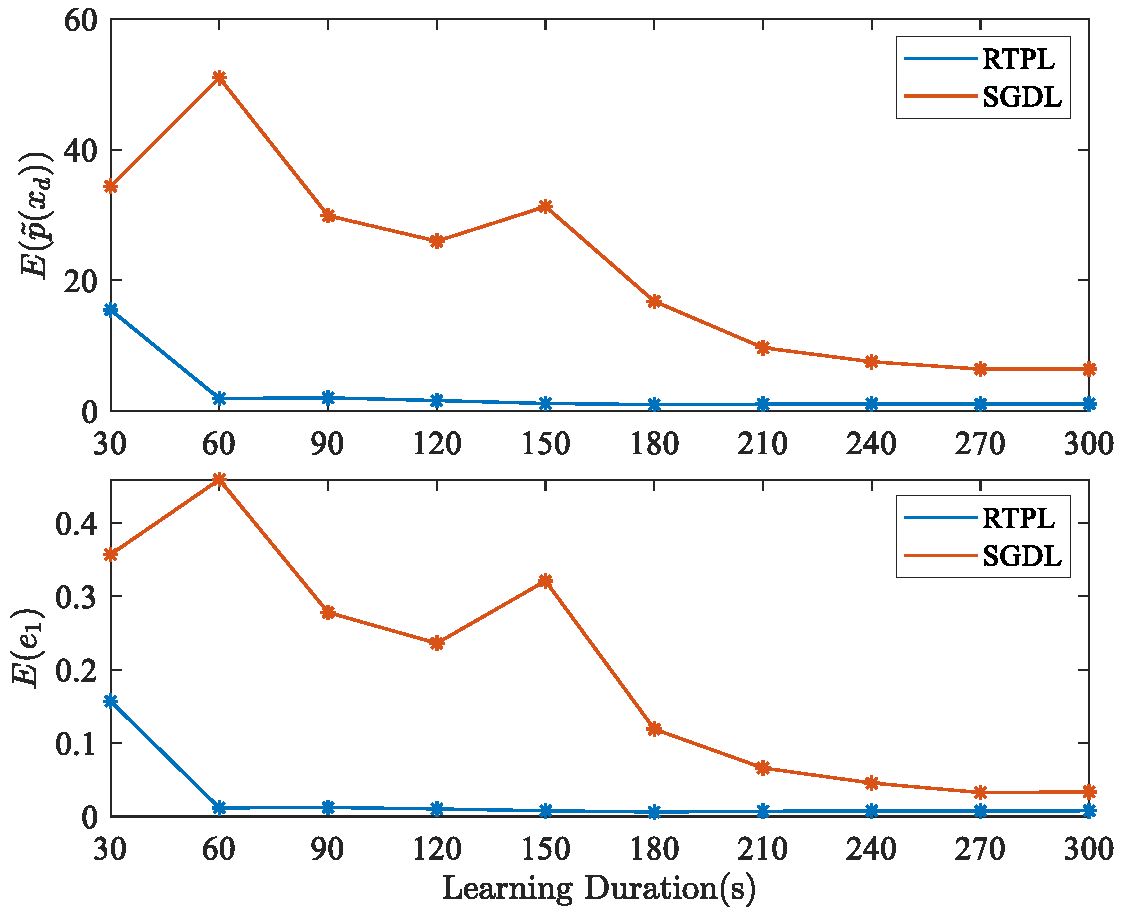}
	\caption{Increasing accuracy of the knowledge}\vspace{-1em}\label{fig16}
\end{figure}

\section{Conclusion}\label{section7}
In this paper, an RBFNN based filter-free learning control scheme named RTPL is proposed by applying SMRLS according to measurable errors in closed-loop systems. Theoretical analysis shows that the exponential convergence of weights can be realized along arbitrary reference trajectories. Thanks to the unique memory mechanism of SMRLS, the passive knowledge forgetting phenomenon is suppressed and RTPL achieves multiple merits including fast convergence speed, robustness to hyperparameter setting, good generalization ability and resistance to parameter perturbation. More interestingly, RTPL can progressively accumulate knowledge from various control tasks such that an increasingly accurate approximation of the unknown dynamics is obtained. It is believed that RTPL will have promising applications in many fields including motion control, robotics, human motor control, etc, which will be further investigated. 

%%%%%%%%%%%%%%%%%%%%%%%%%%%%%%%%%%%%%%%%%%%%%%%%%%%%%%%

% Can use something like this to put references on a page
% by themselves when using endfloat and the captionsoff option.
\ifCLASSOPTIONcaptionsoff
  \newpage
\fi

\small
\bibliographystyle{ieeetr}
\bibliography{refer}

%\iffalse

\begin{IEEEbiography}[{\includegraphics[width=1in,height=1.25in,clip,keepaspectratio]{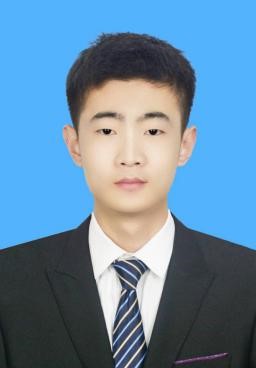}}]{Yiming Fei}
	received the B.Eng. and M.Eng. degrees from the Harbin Institute of Technology, China, in 2020 and 2023, respectively. He is currently pursuing a Ph.D. degree in Computer Technology at Zhejiang University, China. His current research interests include reinforcement learning, cognitive neuroscience, neural network control, system identification and motion control.
\end{IEEEbiography}
\begin{IEEEbiography}[{\includegraphics[width=1in,height=1.25in,clip,keepaspectratio]{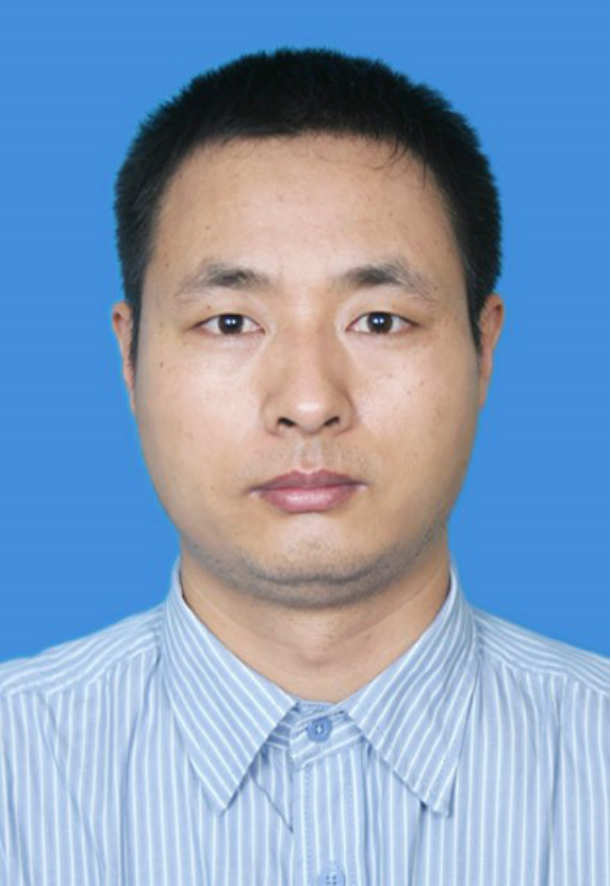}}]{Jiangang Li}
	(M09-SM20) received the B.Eng., M.Eng., Ph.D. degrees from the Xi’an Jiaotong University, China, in 1999, 2002 and 2005, respectively. From 2007 to  present, he has been an Associate Professor in control science and engineering with the School of Mechanical Engineering and Automation, Harbin Institute of Technology Shenzhen, China.	From 2015 to 2016, he has been a Visiting Associate in computing and mathematical sciences with the California Institute of Technology. His general research interests include high velocity and high performance control system design, motion control and motion planning.
\end{IEEEbiography}

\begin{IEEEbiography}[{\includegraphics[width=1in,height=1.25in,clip,keepaspectratio]{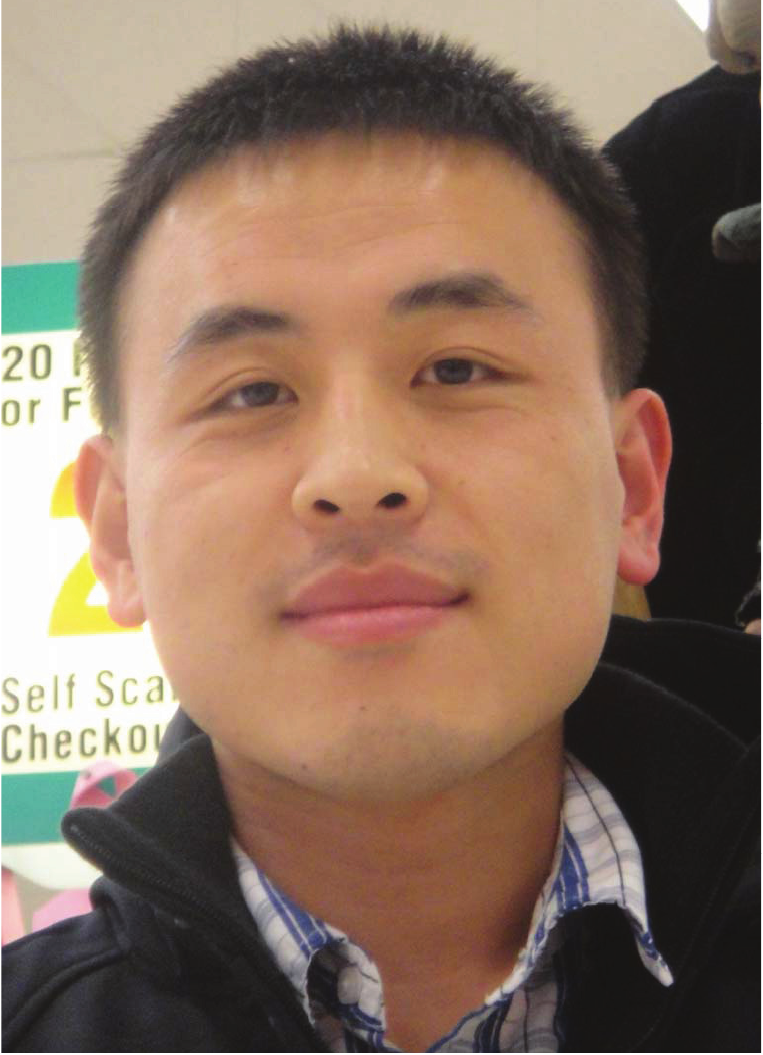}}]{Yanan Li}
	(M14-SM21) received the B.Eng. and M.Eng. degrees from the Harbin Institute of Technology, China, in 2006 and 2008, respectively, and the Ph.D. degree from the National University of Singapore, Singapore, in 2013. He is currently a Reader in Robotics with the Department of Engineering and Design, University of Sussex, Sussex, U.K. From 2015 to 2017, he was a Research Associate with the Department of Bioengineering, Imperial College London, U.K. From 2013 to 2015, he was a Research Scientist with the Institute for Infocomm Research, Agency for Science, Technology and Research, Singapore. His general research interests include human-robot interaction, robot control and control theory and applications.
\end{IEEEbiography}

%\fi

% biography section
%
% If you have an EPS/PDF photo (graphicx package needed) extra braces are
% needed around the contents of the optional argument to biography to prevent
% the LaTeX parser from getting confused when it sees the complicated
% \includegraphics command within an optional argument. (You could create
% your own custom macro containing the \includegraphics command to make things
% simpler here.)
%\begin{IEEEbiography}[{\includegraphics[width=1in,height=1.25in,clip,keepaspectratio]{mshell}}]{Michael Shell}
% or if you just want to reserve a space for a photo:

%\begin{IEEEbiography}{Michael Shell}
%Biography text here.
%\end{IEEEbiography}

% if you will not have a photo at all:
%\begin{IEEEbiographynophoto}{John Doe}
%Biography text here.
%\end{IEEEbiographynophoto}

% insert where needed to balance the two columns on the last page with
% biographies
%\newpage

%\begin{IEEEbiographynophoto}{Jane Doe}
%Biography text here.
%\end{IEEEbiographynophoto}

% that's all folks
\end{document}